\RequirePackage{silence}
\newif\ifarxiv
\arxivtrue

\ifarxiv

\documentclass{article}
\usepackage{amsthm}
\usepackage{amssymb}
\usepackage{amsmath}
\usepackage{hyperref}
\usepackage{cleveref}
\usepackage{thm-restate}
\usepackage{caption}
\usepackage{subcaption}
\usepackage[a4paper,total={140mm,222mm}]{geometry}

\newtheorem{theorem}{Theorem}
\newtheorem{lemma}[theorem]{Lemma}

\theoremstyle{definition}
\newtheorem{definition}[theorem]{Definition}

\title{Ski Rental with Uniform Stochastic Signals} 

\usepackage[numbers]{natbib}

\newtheorem{remark}{Remark}

\else

\fi

\usepackage{xspace}
\usepackage{todonotes}
\usepackage{dsfont}
\usepackage{pgfplots}
\pgfplotsset{compat=1.18}

\newcommand{\USSR}{\textsc{Ussr}\xspace}
\newcommand{\OPT}{\textsc{Opt}\xspace}
\newcommand{\cost}{\textit{cost}\xspace}
\newcommand{\costfst}[3]{\cost(#1,#2,#3)\xspace}
\newcommand{\costf}{\costfst{f}{s}{t}}
\newcommand{\ecost}{\textit{Ecost}\xspace}
\newcommand{\ecostft}[2]{\ecost(#1,#2)\xspace}
\newcommand{\ecostf}{\ecostft{f}{t}}
\newcommand{\eratio}{\textit{ECR}\xspace}
\newcommand{\eratioft}[2]{\eratio(#1,#2)\xspace}
\newcommand{\eratiof}{\eratioft{f}{t}}

\ifarxiv

\date{}

\author{
    Tudor Cardas, Lucas Perotin, Bertrand Simon }
\date{
\small{
Université Grenoble Alpes, Grenoble INP, CNRS, INRIA, LIG, Grenoble, France\\
    tudor.cardas@grenoble-inp.org, lucas.perotin@univ-grenoble-alpes.fr, bertrand.simon@cnrs.fr
}}

\fi

\begin{document}

\maketitle

\begin{abstract}

Ski rental is one of the most fundamental online problems, where an agent
repeatedly decides whether it should continue renting for a service, or buy it
at a one-time fixed cost. In classic online analysis, an adversary may freely
select any arbitrary time horizon $t$ after which the instance terminates, hence
leading to strong competitive ratio lower bounds of 2 (deterministic) and
$e/(e-1)$ (randomized).

We consider a novel but natural model aiming at limiting the adversary's power
without degrading its flexibility. A stochastic \emph{signal}, containing no
information apart from its arrival time, is transmitted to the algorithm at a
time drawn from a known distribution scaled to $[0,t]$.  This framework allows us
to account for visible effects of the adversary's choice to the algorithm.
Focusing as a first step on uniform distributions, we show that the best
possible competitive ratios are approximately $1.478$ (deterministic) and $1.438$
(randomized). Surprisingly, simplistic deterministic strategies already achieve a
performance of $1.489$.
\end{abstract}

\newpage

\section{Introduction}
\label{sec:intro}

The ski rental problem is one of the simplest yet fundamental problems studied
under the competitive analysis framework~\cite{borodin2005online}, modeling the generic dilemma between
irreversible one-time investments and recurring payments (rent, subscription, leasing, 
X-as-a-service, ...). We focus on the following continuous version of this
problem: the algorithm rents an item (e.g., skis) at unit rate until a chosen time $x$ after
which it buys it for a unit cost. The adversary freely decides a time horizon
$t$ after which the instance stops. The goal of the algorithm is to minimize its
competitive ratio, i.e., the worst ratio of its cost ($t$ if $t<x$, otherwise
$1+x$) divided by the optimal cost ($\min(1,t)$).

Under this setting, the best ratios achievable by deterministic and randomized
algorithms are respectively $2$ and
$e/(e-1)\approx1.58$~\cite{karlin1988competitive,karlin1994competitive}. The
deterministic solution buys at time $1$ while the randomized algorithm buys
before time $t\leq1$ with probability $\frac{e^t-1}{e-1}$. Given the fundamental
nature of this problem, a vast direction of research focused on extending the
model to target more complex scenarios, e.g., considering multiple
shops~\cite{ai2014multi}, combinatorial
constraints~\cite{zhang2020combinatorial}, multiple
slopes~\cite{lotker2012rent}, repeated instances~\cite{antoniadis2021learning}.

Our objective is not to complexify the model, but rather to empower the algorithm, or
equivalently diminish the adversary's advantage, in order to achieve a less
pessimistic framework, surpassing one of the main issues of competitive
analysis. A related direction is to assume that the horizon time $t$ is drawn
from a distribution which is known, or partially known, by the
algorithm~\cite{kang2025learningaugmentedskirentaldiscrete,madry2011semi}. Such
models assume strong knowledge on the adversary which is then highly restricted.
Our framework is more closely inspired by the recent concept of \emph{learning-augmented
algorithms}, where algorithms have access to an unreliable prediction about the
future instance~\cite{alps_website}. Ski rental has been one of the first
problems studied under this lens due to its fundamental
nature~\cite{kumar2024improvingonlinealgorithmsml}, and is still currently
studied under a more general framework of stochastic
predictions~\cite{cui2026skirentaldistributionalpredictions,kim2026robust}.

Instead of giving the algorithm access to a \emph{prediction}, we assume that
the instance generates a \emph{signal}, detected by the algorithm, which contains
no information except its arrival time $s$. The signal emission is generated
using a publicly known density function  $\mu:[0,1]\rightarrow\mathbb R^+$ to
obtain a density function $\mu_t$ defined on $[0,t]$ as $\mu_t(x)=\mu(x/t)/t$.
The rationale is that some information about the value of $t$ may be perceptible
by the algorithm. However, it is not possible to estimate the distribution of
$t$ based on $s$ as the value of $t$ is chosen adversarially and may have a
fixed distribution. Nevertheless, the algorithm can leverage this information by
adapting its strategy: detecting a signal soon encourages the algorithm to buy
later as it is beneficial for small $t$, and is an unlikely event for large $t$.
This model has been very recently considered for
scheduling~\cite{cosson2026online} and
secretary~\cite{eberle2026secretaryproblemstochasticprecursor} problems.

We focus more precisely on the special case where $\mu$ is the uniform
distribution, and denote the problem as the \emph{Uniform Signal Ski Rental
problem} (\USSR). An algorithm is here entirely defined by a \emph{response
function} $f:\mathbb R^+\to\mathbb R^+ \cup \{ \infty \}$, where $f(s)$ denotes the buying time if
a signal is received at time $s$. After a few structural results on such
response functions, we show that a simple deterministic response function taking
only 2 different values is enough to surpass the classic randomized lower bound
of $1.58$, hence illustrating the power of this model. Our main results are
matching upper and lower bounds for both deterministic and randomized algorithms.
We further show that small deviations of $\mu$ lead to small perturbations of the guarantees.

\begin{theorem}
The optimal competitive ratios for the \USSR problem are $\approx 1.478$ for
deterministic algorithms and $\approx 1.438$ for randomized algorithms.
\end{theorem}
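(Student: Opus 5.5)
The plan is to reduce both questions to optimization problems over response functions. Then I solve those problems explicitly: a structured candidate gives the upper bound, and a matching adversary (a distribution over horizons $t$) gives the lower bound.

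\textbf{Setting up the cost.} For horizon $t$ and a deterministic response $f$, the signal $s$ is uniform on $[0,t]$. The algorithm pays $t$ if $f(s)\ge t$ and $f(s)+1$ otherwise; a signal arriving after the buying time is irrelevant, which forces $f(s)\ge s$ without loss of generality. So
\[
\ecostf=\frac1t\int_0^t \bigl(\mathds{1}[f(s)<t](1+f(s))+\mathds{1}[f(s)\ge t]\,t\bigr)\,ds ,
\]
and the expected competitive ratio is $\eratiof=\ecostf/\min(1,t)$. By scaling, the quantity to control is $\sup_t \eratiof$.

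\textbf{Structural steps, before optimizing.} First I would show that we may take $f$ nondecreasing; swapping values on sets of equal measure should not increase the cost for any $t$, though this needs checking. Second, $f(s)=\infty$ for all $s\ge 1$ suffices, since when the signal arrives after time $1$ the optimum already pays $1$, and buying at once is then comparable. With these reductions, a deterministic $f$ is described by a nondecreasing threshold curve. For a randomized algorithm, $f(s)$ becomes a distribution $F_s$ over buying times, and the cost is linear in $\{F_s\}$.

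\textbf{Deterministic bound.} I would guess the optimal deterministic $f$ is a step or continuous curve that equalizes $\eratiof$ across all $t$ in the relevant range. This mirrors the classical equalizing argument, since an optimum typically makes the adversary indifferent. Differentiating $t\,\ecostf=c\,\min(1,t)$ in $t$ gives a functional/ODE relation for $f^{-1}$. Solving it, possibly piecewise with one or two breakpoints, gives the constant $c\approx1.478$; the two-step function analyzed earlier already gives $1.489$, which is a useful sanity check. For the lower bound, I would fix any nondecreasing $f$ and derive a contradiction from $\eratiof\le c-\varepsilon$ for all $t$. Integrating these inequalities against suitable weights over $t$ should yield an integral inequality on $f$ that no function can satisfy when $c<1.478$.

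\textbf{Randomized bound.} I would use Yao's principle and LP duality. The upper bound comes from an explicit family $F_s$ that equalizes the ratio, in the spirit of the $\frac{e^t-1}{e-1}$ solution. The lower bound comes from a distribution over $t$ against which every deterministic nondecreasing $f$ has ratio at least $c\approx1.438$. Against a fixed adversary distribution, the cost decouples pointwise in $s$: for each $s$, choose $f(s)$ to minimize $\int_{t\ge s}(\dots)\,d\pi(t)/t$. This gives a one-dimensional minimization per $s$, and complementary slackness with the equalizing algorithm determines $\pi$.

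The main obstacle I expect is solving the resulting integral/ODE equations in closed form and certifying tightness. The constants $1.478$ and $1.438$ are likely only numerical roots. So I would characterize the optimum through a transcendental equation and verify the primal and dual solutions numerically but rigorously.
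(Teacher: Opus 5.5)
Your proposal has genuine gaps, starting with the model itself.

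\textbf{Model and structural reductions.} As written, your cost formula requires $f(s)\ge s$ for every $s$, and this removes the fallback. A legal response must be able to buy at $s_F$ before any signal has arrived, i.e.\ $f(s)=s_F<s$ for $s>s_F$. Without a fallback the cost is at least $\min(t,1+s)$, so the expected cost is about $t/2$ and the ratio is unbounded. Remark~\ref{rmk:sfcr} shows that any reasonable algorithm has $s_F\le1$, so ``$f=\infty$ for $s\ge1$'' is also wrong.

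Your monotonicity reduction points the wrong way. An early signal is evidence of a short horizon, and the optimal response is \emph{non-increasing}: $\infty$ on $[0,s_R)$, then $h^{-1}$ decreasing from $1$ to $s^*=\ln(2e/3)$ on $[s_R,s_F)$, then $s_F$. Rearranging values is not cost-neutral, because horizon $t$ only sees $f$ on $[0,t]$. Worse, a nondecreasing legal $f$ satisfies $s\le f(s)\le s_F$ on $[0,s_F]$, so it has bought by time $s_F$ on every signal. Its ratio then tends to $1/s_F$ as $t\downarrow s_F$ and to $1+s_F$ as $t\to\infty$, so $CR(f)\ge(1+\sqrt5)/2$. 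Your reduction would therefore discard every algorithm with ratio below $1.6$.

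\textbf{Upper bound.} Equalizing the ratio over all $t$ is also impossible. On $[s_F,1]$, every legal $f$ has ratio at least the fallback-only curve $\frac{1+2s_F}{t}-\frac{s_F(1+s_F)}{t^2}$ of Lemma~\ref{lem:LB1}. The paper therefore equalizes only on $[s^*,\infty)$. It solves $1+s_F+h-h'=2C^*t$ for $h=f^{-1}$ and places $s^*$ at the peak of that curve. This peak condition, together with flatness for $t\ge1$, pins down $s^*=\ln(2e/3)$ and $C^*=\frac12+\frac{\sqrt{1+{s^*}^2}}{2s^*}$ in closed form, not as numerical roots.

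\textbf{Deterministic lower bound.} Integrating $\eratiof\le c$ against weights that do not depend on $f$ is a Yao certificate. Such a certificate also bounds randomized algorithms, so it can never exceed $C_{\mathrm{rand}}=\frac{4e^2}{4e^2-9}\approx1.438$, which the algorithm of Theorem~\ref{th:rand-upper} achieves. The missing idea is to adapt the adversary to $f$ through $s_F$. The paper combines two bounds:
\begin{itemize}
\item the single horizon $t^*=\frac{2s_F(1+s_F)}{1+2s_F}$ (Lemma~\ref{lem:LB1}), whose bound decreases in $s_F$;
\item the random horizon equal to $s^*$ with probability $s^*$ and to $s^*+\mathrm{Exp}(1)$ otherwise, against which buying after a pre-fallback signal never beats renting in expectation. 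This yields $\frac{1+s_F-s_F^2}{s^*(2-s^*)}$, increasing in $s_F$.
\end{itemize}
The two bounds cross at $C^*$.

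\textbf{Randomized lower bound.} The cost does not decouple pointwise in $s$, because all signals after $s_F$ share the single purchase time $s_F$. If you keep $f(s)\ge s$, you only bound algorithms with infinite ratio. If you allow $f(s)<s$ freely, you admit non-causal responses such as ``rent if $s<1/3$, buy at time $0$ if $s\ge1/3$''. That response has ratio $4/3$, so no adversary can certify more than $4/3$ through the relaxation. The paper handles the coupling by writing $\mathbb E[C_f(T)]=R(s_F)+\int_0^{s_F}\Delta(f(s))\,ds$. It then takes $T$ to be $\mathrm{Exp}(2)$ on $[0,s^*)$, plus an atom $\frac12e^{-2s^*}$ at $s^*$, plus an $\mathrm{Exp}(1)$ tail. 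With this choice $R$ is constant on $[0,s^*]$ and $\Delta\ge0$.

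\textbf{Missing constructions.} Neither of your upper bounds is actually constructed. The paper's randomized algorithm draws $s_F$ with density $2(C_{\mathrm{rand}}-1)e^{2x}$ on $[0,s^*]$, sets $s_R\propto s_F$, and draws $b_1$ independently in $[s^*,1]$.
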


Intuitively, the objective in both settings for the algorithm is to achieve a
\emph{balanced} competitive ratio, i.e., which is the same for the largest
possible interval of adversarial choices $t$. Establishing lower bounds follows
the widely used Yao's principle, designing a randomized adversarial strategy.
Quite surprisingly, such adversaries only focus on the value of $f(\infty)$ and
not the whole function, while optimizing this function is a large part of the
upper bound designs. 
It should further be noted that a simple response function
taking only 3 distinct values is sufficient to achieve a competitive ratio of
$1.489$, very close to the best deterministic algorithm.

\section{Related Work}
\label{sec:related}

As discussed above, the now classic version of ski rental has been solved for
decades, with the optimal deterministic and randomized competitive ratios equal
to $2$ and $e/(e-1)$ respectively. It appeared as a special case of a problem
named \emph{snoopy caching}~\cite{karlin1988competitive,karlin1994competitive},
and formulations with a different adversary were studied as the \emph{rent-to-buy}
problem~\cite{krishnan1995learning}.

A large family of problems has been inspired by this core framework, aiming at
modeling more complex scenarios, while still maintaining the same basic concept
thus algorithm principles. In the \emph{bahncard} problem, a reduction coupon
can be bought in order to decrease the renting
cost~\cite{fleischer2001bahncard}. In the \emph{TCP acknowledgment} problem,
packets arrive over time and the algorithm must balance between sending as few
grouped acknowledgments as possible and minimizing the delay between a packet
arrival and its acknowledgment~\cite{dooly1998tcp}. In the \emph{multi-slope}
ski rental problem, lower rent prices are available once a setup cost is
paid~\cite{lotker2012rent}. This formulation is very close to the \emph{Dynamic
Power Management} problem, in which computing machines have several idle states
available, with associated wake-up costs, and the objective is to minimize the
energy spent on a sparse workload. This problem has been further generalized to
the \emph{multi-shop} ski rental problem, in which it is possible to switch
shops to access other payment options, paying an extra cost~\cite{ai2014multi}.
An orthogonal direction has focused on multi-item ski rental, with various
assumptions on the item payment
constraints~\cite{zhang2020combinatorial,wu2022competitive}.

The aforementioned works focused on classic worst-case competitive analysis with
a powerful adversary able to freely select the worst time horizon for a given
algorithm. A large and diverse line of work in the much wider online analysis
field has been devoted to overcoming the pessimistic nature of this framework by
decreasing the gap between the adversary and algorithm abilities. Regarding ski
rental, a few works considered that the time horizon is drawn from a probability
distribution, which is either partially~\cite{madry2011semi} or
fully~\cite{kang2025learningaugmentedskirentaldiscrete} known. A more general
framework, known as \emph{learning-augmented algorithms}~\cite{alps_website},
assumes that the algorithm has access to an unreliable prediction about the
instance, and the objective is then to outperform classic algorithms if the
prediction is accurate enough, without degrading too much the performance in
case of erroneous predictions. Ski rental was one of the first problems to be
studied under this framework~\cite{kumar2024improvingonlinealgorithmsml}, where
a deterministic algorithm can for instance be at most $3$-competitive and up to
$1.5$-competitive if predictions are perfect. Numerous papers studied
generalizations of the ski rental problem with access to predictions, or
different prediction models. Notably, two recent papers focused on predicting
the probability distribution of the horizon time, using a distance metric
between distributions in the algorithm
guarantees~\cite{cui2026skirentaldistributionalpredictions,kim2026robust}.

Following this concept of unreliable predictions, \cite{benomar2026non} propose
a new model for non-clairvoyant scheduling in which unfinished tasks can signal
the scheduler that they finished an estimated percentage of their computation,
hence guiding the scheduler into which tasks are likely to be the next to
terminate. This concept is at the origin of the stochastic signal model studied
in the present paper, which connects the frameworks of stochastic information
and learning-augmented algorithms. This idea has been applied to the same
non-clairvoyant scheduling problem~\cite{cosson2026online} in which the authors
show that the competitive ratio of the problem $1||\sum C_i$ goes from 2 in the
classic setting to a value in $[1.2;1.752]$ when each job outputs a signal drawn
uniformly at random during its execution. The secretary problem has also been
studied recently under a similar model, where, among other results, a uniform
signal leads to  a competitive ratio improvement from $1/e$ to
$1/2$~\cite{eberle2026secretaryproblemstochasticprecursor}.

\section{Preliminaries}
\label{sec:prelim}

\subsection{Model definition}
\label{sec:model}

As mentioned at the beginning of this paper, we consider the continuous variant of the ski rental problem. An agent decides at every instant whether it continues renting, for a total renting cost equal to the duration elapsed since time $0$, or it buys, adding a definitive unit cost of $1$. An adversary chooses the time $t$, hidden to the algorithm,  at which the instance stops. The optimal clairvoyant cost is then equal to $\OPT(t)=\min(1,t)$, while an online algorithm buying at time $x$ pays $t$ if $x>t$ and $1+x$ if $x\leq t$. 
The best competitive ratio (maximum for $t$ of the algorithm cost divided by $\OPT(t)$) achievable by a deterministic online algorithm is then $2$, buying at time 1.

In the Uniform Signal Ski Rental problem (\USSR), the algorithm also has access to an information-empty signal $s$ which is drawn uniformly at random in the interval $[0,t]$. Receiving the signal gives no direct information on the value of $t$, but the algorithm can still benefit from it.

An algorithm is then characterized by a response function $f:[0,\infty) \rightarrow \mathbb R^+ \cup\{\infty\}$. If a signal is received at time $s$, then the algorithm buys at time $f(s)$. By definition, as it is impossible to buy retroactively, we then must have $f(s)\geq s$ until a fallback time $s_F\in\mathbb R^+\cup\{\infty\}$ for which $s>s_F\Rightarrow f(s)=s_F$. This means that if no signal is received before time $s_F$, then the algorithm directly buys at time $s_F$. We denote this required property on $f$ as saying that $f$ is \emph{legal}. As $s_F$ depends on $f$, we use the notation $s_F(f)$ if the context is not clear.

\begin{definition}
A function $f:[0,\infty)\to \mathbb R^+\cup\{\infty\}$ is \emph{legal} if there exists $s_F\in\mathbb R^+ \cup \{ \infty \}$ such that $f(x)\geq x$ for $x\leq s_F$ and $f(x)=s_F$ for $x>s_F$.
\end{definition}

The cost of the algorithm defined by $f$ for fixed values of $s$ and $t$ is noted $\costf$. When $s$ is drawn uniformly at random in $[0,t]$, the expected cost of this is denoted by $\ecostf = E_{s\sim U([0,t])}(\costf)$. We also denote the ratio of this expected cost divided by the optimal cost: $\eratiof = \ecostf /\min(1,t) $.

The objective is to design algorithms, or equivalently response functions $f$, minimizing the expected competitive ratio:
\begin{align*}
CR(f) = \sup_{t>0} E_s(\costf/\min(1,t)).
\end{align*}

We first focus on deterministic response functions, where the only source of
randomness may come from the arrival time $s$ of the signal. In particular, the
fallback time $s_F$ must be fixed. We then extend the scope to fully randomized
algorithms.

\subsection{First results and characterizations}
\label{sec:firstres}

We start this section by a warm-up remark giving a first upper bound on $s_F$ for relevant functions, using the classic deterministic ski rental algorithm.

\begin{remark}
\label{rmk:sfcr}
For any response function $f$ of the \USSR problem leading to a competitive ratio at most 2, the fallback time $s_F$ is at most 1.
\end{remark}

\begin{proof}
The classic 2-competitive online algorithm corresponds to the constant function $f(s)=1$, buying at time 1 regardless of the signal.

Consider a function $f$ such that $s_F>1$ is finite. For $t>1$, the cost incurred by $f$ is at least $1+s_F$ if $s>s_F$, so the expected cost is at least $(1+s_F)\frac{t-s_F}{t}>2$ for $t$ large enough, while the optimal cost equals 1. If $s_F$ is infinite, the cost incurred by $f$ is at least the signal time $s$, so $t/2$ in expectation.
\end{proof}

This remark allows us to understand the need to obtain better upper bounds on the competitive ratio of the \USSR problem, in order to get better constraints on the structure of optimal response functions. Hence, we now give a simple response function having a surprisingly low competitive ratio smaller than $1.5$. This ensures that better response functions must satisfy $s_F<0.5$. We do not detail here how this function was obtained as it is an approximation of the optimal 3-step function introduced in the next section, whose optimality proof relies on further characterizations.

\begin{lemma}
\label{lem:cr1.5}
There exists a response function $f$ of the \USSR problem composed of 3 steps for which $CR(f)<1.49$.
\end{lemma}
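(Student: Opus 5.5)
The plan is to exhibit an explicit function and check its competitive ratio by direct computation. We take a legal function with three values, of the form
\[
f(s)=\begin{cases} y_1 & 0\le s< a,\\ y_2 & a\le s\le s_F,\\ s_F & s> s_F,\end{cases}
\]
with $a\le s_F\le y_2$ and $y_1\ge a$, so that $f$ is legal. The early value $y_1$ should be large (close to or above $1$): an early signal suggests a small horizon $t$, so the algorithm keeps renting. The intermediate value $y_2$ should be moderate, and the fallback $s_F$ must lie a little below $0.49$. Indeed, as $t\to\infty$ almost every signal falls after $s_F$, so $\eratiof\to 1+s_F$. The parameters $(a,s_F,y_1,y_2)$ will be fixed numerically by approximately balancing the local maxima of $t\mapsto\eratiof$ over its pieces. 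We would state them as explicit rationals so that every later inequality can be checked exactly.

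Next we compute $\ecostf$ in closed form. For a fixed $t$, $\costf=t$ when $f(s)>t$ and $\costf=1+f(s)$ otherwise. Since $s$ is uniform on $[0,t]$ and $f$ is a step function, $\ecostf$ is a sum of at most three terms. Each term has the form $\frac{\ell}{t}\cdot c$, where $\ell$ is the length of $[0,t]$ covered by a step and $c\in\{t,1+y_1,1+y_2,1+s_F\}$. The breakpoints in $t$ are $a$, $s_F$, $y_2$, $y_1$ and $1$, where the value $1$ is where $\OPT(t)$ switches from $t$ to $1$. Sorting them cuts $(0,\infty)$ into at most six intervals. On each interval, $\eratiof$ is an explicit rational function of the form $(\alpha t^2+\beta t+\gamma)/t^2$ or $(\alpha t+\beta)/t$. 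For example, for very small $t$ (below $a$ and below $y_1$) the algorithm never buys, so $\eratiof=1$. For $t$ larger than all breakpoints,
\[
\eratiof=\frac{a(1+y_1)+(s_F-a)(1+y_2)+(t-s_F)(1+s_F)}{t},
\]
which is monotone in $t$ and tends to $1+s_F<1.49$.

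On each interval the supremum is attained at an endpoint or at a critical point where the derivative of a rational function of degree at most two vanishes, and these points are explicit. The remaining work is to evaluate these finitely many candidate values and check that all are below $1.49$. If $f$ jumps at a breakpoint, we also need the one-sided limits there. The main obstacle is the choice of parameters. The pieces where $t$ lies between $s_F$ and $\min(y_2,1)$ compete with the $t\to\infty$ regime. There $\eratiof$ is driven by the probability that the signal arrives early, which triggers a late buy and hence pure renting cost, against the cost $1+y_2$ paid on intermediate signals. This forces $s_F$, $y_2$ and $a$ to be tuned jointly. To find the parameters, we would first solve numerically the system equalizing the interior maxima with $1+s_F$. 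We would then round to rationals with slack and recheck each piece exactly.
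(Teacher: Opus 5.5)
The gap is that the proposal never names the function. Your method (an explicit three-step response function, with $\eratiof$ bounded piecewise between breakpoints via endpoints and critical points) is the paper's route, and those piecewise computations are correct. But for a purely existential statement, the concrete parameters and the resulting numerical check \emph{are} the proof. ``Solve numerically, round, recheck'' verifies nothing.

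This is not a formality here: the feasible region is tiny, and your heuristics for locating it point the wrong way.

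(i) For any legal $f$ and $t\in[s_F,1]$, signals before $s_F$ cost at least $t$. Hence $\eratiof\geq\frac{1+2s_F}{t}-\frac{s_F(1+s_F)}{t^2}$, whose peak is $1+\frac{1}{4s_F(1+s_F)}$, so $CR(f)<1.49$ forces $s_F>0.3719$. With $y_1\geq y_2$, your ratio on $[s_F,\min(y_2,1))$ is exactly this expression and involves neither $a$, $y_1$ nor $y_2$. So the trade-off you describe there, against the cost $1+y_2$, does not occur.

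(ii) With $y_1$ finite, the binding regime is not $t\to\infty$ but $t\in[1,y_1]$, where signals in $[0,a)$ still pay $t$. For $y_2=1$,
\[
\eratiof=1+a+s_F+\frac{s_F(1-s_F)-2a}{t}\quad(1\leq t<y_1),\qquad
\eratioft{f}{y_1}=1+s_F+a+\frac{s_F(1-s_F)-a}{y_1}.
\]
As $y_1\downarrow1$, the last value tends to $1+2s_F-s_F^2>1.6$, whatever $a$ is. So $y_1$ ``close to $1$'' fails, and equalizing interior maxima with $1+s_F$ is the wrong system to solve. The value at $t=1$ forces $a>2s_F-s_F^2-0.49$, and the value at $t=y_1$ forces $a<0.49-s_F$. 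Together these squeeze $s_F$ into $(0.3719,0.3731)$, far from ``a little below $0.49$''. They also force $a\approx\frac12 s_F(1-s_F)\approx0.117$ and $y_1$ very large.

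To close the gap, take the paper's witness: $f=+\infty$ on $[0,s_R)$, $f=1$ on $[s_R,s_F]$, $f=s_F$ afterwards, with $s_R=0.117$ and $s_F=0.372$. Then:
\begin{itemize}
\item $\eratiof=1$ for $t<s_F$;
\item for $t\in[s_F,1)$, $\eratiof=\frac{1.744}{t}-\frac{0.510384}{t^2}\leq\frac{1.744^2}{4\cdot0.510384}<1.49$;
\item for $t\geq1$, $\eratiof=1+s_R+s_F+\frac{s_F-2s_R-s_F^2}{t}=1.489-\frac{0.000384}{t}$.
\end{itemize}
With a renting step, the large-$t$ limit is $1+s_R+s_F$, not $1+s_F$. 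Choosing $s_R\approx\frac12 s_F(1-s_F)$ is what makes the ratio flat on $[1,\infty)$.

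If you want to keep $y_1$ finite, the same $a$, $y_2=1$, $s_F$ with any $y_1\geq120$ also work. The ratio on $[1,y_1)$ is unchanged. At $t=y_1$ it equals $1.489+0.116616/y_1<1.49$, and it decreases for $t>y_1$.
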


\begin{proof}
Consider $f$ equal to $\infty$ on $[0,s_R)$, equal to $1$ on $[s_R,s_F]$ and to $s_F$ on $(s_F,\infty)$, with $s_R  =0.117$ and $s_F=0.372$.
For $t \geq 1$, the optimal cost is 1 and the algorithm pays $t$ if $s<s_R$, $2$ if $s\in[s_R,s_F]$ and $1+s_F$ if $s>s_F$, we therefore have:
    \begin{align*}
        \eratiof &= \frac{s_R}{t} \cdot t + \frac{s_F-s_R}{t} \cdot 2 + \frac{t-s_F}{t} \cdot (1+s_F)  \\
        &= 1+ s_R + s_F + \frac{s_F-2s_R-s_F^2}{t} 
        = 1.489 - \frac{0.000384}{t} \leq 1.489.
    \end{align*}

For $t\in[s_F,1)$, the optimal cost is $t$ and we have:
\begin{align*}
\eratiof &= \frac1t\left(\frac {s_F} t \cdot t + \frac{t-s_F}{t}(1+s_F)\right) \\
         &= \frac{2s_F+1}{t} - \frac{s_F(1+s_F)}{t^2} = \frac{1.744}{t} - \frac{0.510384}{t^2} \leq \frac{1.744^2}{4\cdot0.510384} <1.49.
\end{align*}

For $t<s_F$, the algorithm never buys so we have $\eratiof=1$.
\end{proof}

\section{Optimal simple step functions (few intervals)}
\label{sec:step}

In this section, we restrict the study to simple step response functions $f$ composed of few steps. Despite the simplicity of the functions, the optimality proof requires lengthy case analysis and is deferred to the appendix.

We first compute the best competitive ratio that can be achieved for 2 steps. We can see on Figure~\ref{fig:2_interval} that this solution balances the cost obtained for $t<1$ and for $t\geq 1$, where the peak is achieved for $t=1$.

\begin{lemma}
\label{lem:2step}
The best competitive ratio achieved for a 2-step response function $f$ for the \USSR problem is $\approx1.55$, buying at time $1$ for $s<s_F\approx0.335$ and $s_F$ otherwise.
\end{lemma}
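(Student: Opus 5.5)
A 2-step legal response function has the form $f(s)=a$ for $s\le s_F$ and $f(s)=s_F$ for $s>s_F$, with $a\ge s_F$ a constant (possibly $\infty$). The plan is to compute $CR(f)$ in closed form as a function of $(a,s_F)$ and then optimize.

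\textbf{Reduction to $a=1$.} First compare $a$ with $1$. For $t\ge1$ the expected cost is
\[
\frac{s_F}{t}\,c_a(t)+\frac{t-s_F}{t}(1+s_F),
\]
where $c_a(t)=1+a$ if $a\le t$ and $c_a(t)=t$ otherwise. If $a>1$, then at $t\in[1,a)$ the signal branch costs $t$; at $t=a$ it costs $1+a>2$. If $a<1$, the branch $s\le s_F$ is charged $1+a$ at every $t\in[a,1)$ although $\OPT=t$. I would show that $a=1$ minimizes the peak ratio. Intuitively, $a=1$ is the classical 2-competitive choice, and the remaining terms do not depend on $a$ once $t\ge\max(a,1)$ or $t<s_F$. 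The argument is to check at $t=1$ (for $a<1$, the ratio is at least its value with $a=1$ plus $(a-1)$-dependent terms; here care is needed) and at $t=a$ (for $a>1$), that the ratio is at least the optimum obtained below. If a clean monotonicity argument fails, I would treat $a$ as a free parameter and do a direct two-variable case analysis. This reduction is the main obstacle.

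\textbf{Case $a=1$.} For $t\ge1$,
\[
\mathrm{ECR}=1+s_F+\frac{s_F-s_F^2}{t},
\]
which is decreasing in $t$, so the peak over this range is at $t=1$, with value $1+2s_F-s_F^2$. For $t\in[s_F,1)$, as in Lemma~\ref{lem:cr1.5},
\[
\mathrm{ECR}=\frac{1+2s_F}{t}-\frac{s_F(1+s_F)}{t^2}.
\]
Its maximum over $t>0$ is attained at $t^*=2s_F(1+s_F)/(1+2s_F)$ with value $(1+2s_F)^2/(4s_F(1+s_F))$, provided $t^*\in[s_F,1)$; otherwise the maximum is at an endpoint. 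For $t<s_F$ the ratio is $1$.

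\textbf{Balancing.} The first term $1+2s_F-s_F^2$ increases in $s_F$ on $[0,1]$, while the second $(1+2s_F)^2/(4s_F(1+s_F))$ decreases. Hence the optimum is at the balance point
\[
1+2s_F-s_F^2=\frac{(1+2s_F)^2}{4s_F(1+s_F)}.
\]
I would solve this numerically, giving $s_F\approx0.335$ and value $\approx1.55$. I would then verify that $t^*\approx0.76$ lies in $[s_F,1)$, matching the figure in which the peak is at $t=1$.
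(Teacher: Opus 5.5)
Your $a=1$ computation and the balancing of $1+2s_F-s_F^2$ against $1+\frac{1}{4s_F(1+s_F)}$ coincide with the final step of the paper's proof. However, two steps that must come before it are missing.

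\emph{The breakpoint need not equal the fallback.} Legality only forces the last value to be $s_F$ and the breakpoint $\tau$ to satisfy $\tau\le s_F$. For instance, $f=1$ on $[0,0.2)$ and $f=0.3$ afterwards is a legal 2-step function, in which signals in $[\tau,s_F)$ trigger a purchase at $s_F$. The paper works with the three-parameter family $f_a^\tau(s_F)$ and has to show that $\tau=s_F$ is optimal, whereas you only optimize over the slice $\tau=s_F$. For $a=1$ the gap can be closed as follows. On $[s_F,1]$ the ratio is $\phi(t)=\frac{1+\tau+s_F}{t}-\frac{\tau(1+s_F)}{t^2}$, for $t\ge1$ it decreases from $\phi(1)$, and $\partial_{s_F}\phi=\frac{t-\tau}{t^2}\ge0$ for $t\ge\tau$.
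If the peak $t^*=\frac{2\tau(1+s_F)}{1+\tau+s_F}$ satisfies $t^*\ge s_F$, then $\phi$ is nondecreasing on $[\tau,s_F]$. Hence $\max_{[s_F,1]}\phi=\max_{[\tau,1]}\phi$, which is at least the same maximum with $s_F$ replaced by $\tau$, i.e.\ the competitive ratio of your slice function with fallback $\tau$.
If $t^*<s_F$, then $\tau<\frac{s_F(1+s_F)}{2+s_F}$, hence $\phi(s_F)=1+\frac{s_F-\tau}{s_F^2}>1+\frac{1}{s_F(2+s_F)}$. Together with the bound $1+s_F$, this gives a ratio of at least $\frac{1+\sqrt5}{2}$.
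The paper instead asserts that, for fixed $\tau$, the ratio increases with $s_F$. That is not literally true, since $\phi(s_F)=1+\frac{1}{s_F}-\frac{\tau}{s_F^2}$ decreases once $s_F>2\tau$, so a case split like this one is needed.

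\emph{The reduction to $a=1$ is not carried out, and the check you propose for $a<1$ cannot work.} For $t\ge1$ the ratio of $f_a$ is $1+s_F+\frac{\tau(a-s_F)}{t}$, which is \emph{smaller} when $a<1$, so no comparison at $t=1$ shows that $f_a$ is worse. The price of $a<1$ is paid on $[a,1)$, where the early-signal branch costs $1+a$ while the optimum is $t$. The paper's argument (Lemma~\ref{lem:3stepb1eq1} with $s_R=0$) has three parts:
\begin{itemize}
\item $f_a$ and $f_1$ coincide for $t<a$;
\item $f_1$ is pointwise better on $[a,1)$;
\item the peak of $f_1$ on $[1,\infty)$, namely $1+\tau+s_F-\tau s_F$, is at most $\mathrm{ECR}(f_a,a)=\frac{1+\tau+s_F}{a}-\frac{\tau s_F}{a^2}$.
\end{itemize}
After dividing by $\frac1a-1$, the last inequality reduces to $\tau s_F\left(\frac1a+1\right)\le1+\tau+s_F$, which follows from $\tau\le a$ and $s_F\le1$. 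For $a>1$ your idea is the right one: the functions coincide for $t<1$, and $\mathrm{ECR}(f_a,a)=1+\tau+s_F-\frac{\tau s_F}{a}$ exceeds the peak of $f_1$. For $a=\infty$ you need the limit $t\to\infty$, which gives $1+\tau+s_F$.
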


More formally, the optimal value for $s_F$ is the solution to the equation $8 x^2 + 4 x^3 - 4 x^4 = 1$ belonging to $[0,1]$, and the competitive ratio is $1+2s_F-s_F^2$.

\newcommand{\SetAxisLabels}[1]{\ifnum#1=1
        xlabel={Signal time (s)}, ylabel={Buying time ($f(s)$)}\else
        xlabel={Adversarial horizon (t)}, ylabel={Expected ratio (${ECR}(f,t)$)}\fi
}

\newcommand{\thickness}{ultra thick}

\newcommand{\PlotPiecewise}[8]{\addplot[domain=#1:#2, \thickness, #8] {#5};
    \addplot[domain=#2:#3, \thickness, #8] {#6};
    \addplot[domain=#3:#4, \thickness, #8] {#7};

    \begingroup
	\pgfmathdeclarefunction{x}{0}{\pgfmathparse{#2}}\pgfmathparse{#5}\global\let\yB\pgfmathresult
    \endgroup
    
    \begingroup
	\pgfmathdeclarefunction{x}{0}{\pgfmathparse{#2}}\pgfmathparse{#6}\global\let\yC\pgfmathresult
    \endgroup

    \begingroup
	\pgfmathdeclarefunction{x}{0}{\pgfmathparse{#3}}\pgfmathparse{#6}\global\let\yD\pgfmathresult
    \endgroup
    \begingroup
	\pgfmathdeclarefunction{x}{0}{\pgfmathparse{#3}}\pgfmathparse{#7}\global\let\yE\pgfmathresult
    \endgroup

    \draw[dashed, very thick, #8] (axis cs:#2,{\yC}) -- (axis cs:#2,{\yB});
    \draw[dashed, very thick, #8] (axis cs:#3,{\yD}) -- (axis cs:#3,{\yE});
}

\newcommand{\PlotPiecewisel}[7]{\PlotPiecewise{#1}{#2}{#3}{#4}{#5}{#6}{#7}{blue}
}

\newcommand{\PlotPiecewiser}[7]{\PlotPiecewise{#1}{#2}{#3}{#4}{#5}{#6}{#7}{red}
}

\newcommand{\AxisTick}[4]{\ifstrequal{#1}{x}{
        \draw[dotted, very thick] (axis cs:#2,\pgfkeysvalueof{/pgfplots/ymin}) -- (axis cs:#2,#3) node[at start, below] {#4};
    }{
        \draw[dotted, very thick] (axis cs:0,#2) -- (axis cs:#3,#2);
        \node[left, xshift=-7pt, rotate=90, anchor=center] at (axis cs:0,#2) {#4};
    }}

\begin{figure}[htbp]
\centering
\begin{tikzpicture}
    \begin{axis}[
	axis lines=middle,
        \SetAxisLabels{1}, 
		clip=false,
		xticklabel=\empty,
        yticklabel=\empty,
        xmin=0, xmax=1.5, ymin=0, ymax=1.3,
        samples=100,
		width=.55\textwidth,
		height=.45\textwidth
    ]

        \PlotPiecewisel{0}{0.335}{1}{1.5}{1}{0.335}{0.335}
        
        \AxisTick{x}{0.335}{0.335}{$s_F\approx0.335$} 
        \AxisTick{y}{0.335}{0.335}{$0.335$}
        \AxisTick{y}{1}{0}{1}
    \end{axis}
\end{tikzpicture}
\hfill
\begin{tikzpicture}
    \begin{axis}[
	axis lines=middle,
        \SetAxisLabels{2}, 
		clip=false,
		xticklabel=\empty,
        yticklabel=\empty,
        xmin=0, xmax=2.5, ymin=0.9, ymax=1.7,
        samples=100,
		width=.55\textwidth,
		height=.45\textwidth
    ]
       \PlotPiecewiser{0}{0.335}{1}{2.5}{1}{(1+2*0.335)/x-(0.335*1.335)/x/x}{1+0.335+0.335*(1-0.335)/x}
        
        \AxisTick{y}{1.558}{1}{1.558}
        \AxisTick{y}{1}{0}{1}
		\AxisTick{x}{0.335}{1}{0.335}
		\AxisTick{x}{1}{1.22}{1}
    \end{axis}
\end{tikzpicture}
\caption{2-Interval optimal response function (left) and expected ratio (right).}
    \label{fig:2_interval}
\end{figure}

We now compute the best competitive ratio that can be achieved for 2 steps, plus a renting step for signals lower than a given time $s_R$. The optimal solution is close to the one presented in Lemma~\ref{lem:cr1.5}, but the optimality proof is more technical and allows us to understand the function objective: balancing the cost obtained for $t<1$, and for $t\geq1$ for which the expected cost is constant, thanks to the renting step. Indeed, it can be shown that achieving a constant competitive ratio for $t\geq 1$ is equivalent to achieve $\int_{s_R}^{s_F} f=s_R+s_F^2$, which is a far less restrictive constraint with $s_R>0$. This result, in addition to improve the understanding of the problem on simple functions, also shows that such a function achieves a competitive ratio less than $0.012$ away from the optimal.

\begin{lemma}
\label{lem:3step}
The best competitive ratio achieved for a 3-step response function $f$ including a first step at $+\infty$ for the \USSR problem is $\approx1.489$, renting for $s<s_R\approx 0.117$, buying at time $1$ for $s_R<s<s_F\approx0.372$ and at $s_F$ otherwise.
\end{lemma}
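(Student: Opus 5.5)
We parametrize a general 3-step function with a first step at $+\infty$: $f=\infty$ on $[0,s_R)$, $f=b$ on $[s_R,s_F]$ with $b\geq s_F$, and $f=s_F$ on $(s_F,\infty)$. We then optimize over the three parameters $(s_R,b,s_F)$. By Lemma~\ref{lem:cr1.5} we only need functions with $CR(f)<1.49$. Repeating the argument of Remark~\ref{rmk:sfcr} with the bound $1.49$, such functions have $s_F<0.49$.

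The first step is to show that $b=1$ is optimal. For $t\ge 1$ the expected ratio is
\[
\frac{s_R}{t}\,t+\frac{s_F-s_R}{t}(1+b)+\frac{t-s_F}{t}(1+s_F).
\]
If $b<1$, replacing $b$ by $1$ can only help for $t\in[b,1)$: the signals in $[s_R,s_F]$ now rent for cost $t\le 1+b$, and nothing else changes for $t<b$. If $b>1$, lowering it to $1$ decreases the cost for every $t\ge 1$ and leaves the cost for $t\le 1$ unchanged. One subtlety is that for $b<1$ and $t\in[b,1)$ the ratio may not be monotone. We handle this by comparing pointwise costs for each $(s,t)$: buying at $1$ versus buying at $b<1$ costs $\min(t,2)$ versus $1+b$ once $t\ge b$. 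This does not obviously dominate. So instead we argue directly that with $b<1$ the ratio at $t=b^+$ already exceeds $1.49$ unless $s_F-s_R$ is tiny, in which case we reduce to the 2-step case of Lemma~\ref{lem:2step}, whose ratio is $\approx 1.55$.

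With $b=1$, the formulas of the proof of Lemma~\ref{lem:cr1.5} hold for general $(s_R,s_F)$:
\[
\eratiof = 1+s_R+s_F+\frac{s_F-2s_R-s_F^2}{t}\quad (t\ge1),\qquad
\eratiof=\frac{2s_F+1}{t}-\frac{s_F(1+s_F)}{t^2}\quad (s_F\le t<1),
\]
and the ratio equals $1$ for $t<s_F$. Denote the two regimes $A$ (for $t\geq 1$) and $B$ (for $t<1$). The supremum of $A$ over $t\ge1$ is $1+s_R+s_F$ when $s_F-2s_R-s_F^2\le 0$, and $2+s_F-s_R-s_F^2$ otherwise. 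Since $s_F<1/2$, the maximizer $t^*=2s_F(1+s_F)/(2s_F+1)$ of the $B$ expression lies in $[s_F,1)$. Hence $\sup B=(2s_F+1)^2/(4s_F(1+s_F))$, which depends only on $s_F$ and decreases in $s_F$. The problem becomes minimizing $\max(\sup A,\sup B)$ over $0\le s_R\le s_F$.

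For fixed $s_F$, $\sup A$ is minimized at $s_R=(s_F-s_F^2)/2$, where the two branches coincide, giving $1+\tfrac32 s_F-\tfrac12 s_F^2$. This quantity increases in $s_F$, while $\sup B$ decreases. The optimum is therefore at the $s_F$ where they are equal:
\[
\Bigl(1+\tfrac32 s_F-\tfrac12 s_F^2\Bigr)\cdot 4s_F(1+s_F)=(2s_F+1)^2 .
\]
We solve this numerically to get $s_F\approx 0.372$, $s_R\approx 0.117$ and ratio $\approx 1.489$. We then check that there is no boundary optimum: for $s_F$ near $1/2$ the constraint $t^*<1$ must still hold, and $s_R\le s_F$ holds.

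The main obstacle is the rigorous reduction to $b=1$, especially excluding $b<1$. There the ratio for $t\in[b,1)$ has extra terms and the case analysis grows. If the direct bound fails, we would treat $b<1$ as its own optimization and show its optimum exceeds $1.49$.
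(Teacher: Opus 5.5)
Your final balancing step (with $b=1$ and the middle step ending at the fallback) matches the paper's. The reductions that lead to it, however, have genuine gaps. The most serious is the one you flagged: your plan for $b<1$ cannot work. Take $s_R=0.117$, $s_F=0.372$, $b=0.999$. Then $\eratioft{f}{b}=\frac{1+2s_F}{b}-\frac{s_R+s_F^2}{b^2}\approx1.4899<1.49$, and the competitive ratio of this function is also about $1.4899$, even though $s_F-s_R\approx0.255$ is far from tiny. As $b\to1^-$ this value tends to the optimum $\approx1.4888$, so no bound of the form ``the ratio at $t=b$ is large'' can exclude $b<1$.

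Your argument for $b>1$ is also incorrect as stated. For $t\in[1,\min(b,2))$, signals in $[s_R,s_F)$ used to rent at cost $t<2$, and they pay $2$ after lowering $b$ to $1$. The missing idea, used in Lemma~\ref{lem:3stepb1eq1} and in Case~2 of the proof of Lemma~\ref{lem:2step}, is to compare suprema rather than pointwise costs. With $b'=1$, the ratio on $[1,\infty)$ is $1+s_R+s_F+c/t$ with $c=s_F-2s_R-s_F^2$.
\begin{itemize}
\item If $c<0$, its supremum is the limit $1+s_R+s_F$, which is also the limit of the original ratio.
\item If $c\geq0$, the supremum is $\eratioft{f'}{1}=1+2s_F-s_R-s_F^2$, and this is at most $\eratioft{f}{b}$. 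For $b>1$ one has $\eratioft{f}{b}=1+2s_F-\frac{s_R+s_F^2}{b}$. For $s_F\leq b<1$ the inequality reduces to $(s_R+s_F^2)(1+\frac1b)\leq1+2s_F$, which holds because $c\geq0$ gives $s_R+s_F^2\leq s_F\leq b$.
\end{itemize}
On $[b,1)$ the modified function is pointwise better, as you observed.

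Second, your parametrization does not cover all 3-step functions. In general $f=\infty$ on $[0,s_R)$, $f=b$ on $[s_R,\tau)$ and $f=s_F$ on $[\tau,\infty)$ with $\tau\leq s_F$. Legality only requires $b\geq\tau$, so non-monotone functions with $b<s_F$ are allowed. You set $\tau=s_F$ without justification, and this is not harmless. With $b=1$ and $s_R$ optimized, the large-$t$ ratio is $1+s_F+\tau(1-s_F)/2$, which \emph{decreases} when $\tau$ shrinks for fixed $s_F$. So a shorter middle step genuinely trades off against the regime $t<1$. The paper handles this in two steps. It first reduces to non-increasing functions (Lemma~\ref{lem:stepincr}). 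Then, for fixed $\tau$, it shows that both this large-$t$ value and $\max_{t<1}\eratiof$ increase with $s_F$; the latter peaks at $t^*=\frac{2\tau(1+s_F)}{1+\tau+s_F}$ with value $\frac{(1+\tau+s_F)^2}{4\tau(1+s_F)}$. Hence $s_F=\tau$ is optimal.

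A minor slip: when $s_F-2s_R-s_F^2>0$, the supremum of $A$ is $1+2s_F-s_R-s_F^2$, not $2+s_F-s_R-s_F^2$. Your balancing step uses the correct value.
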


More formally, $s_F$ is the solution in $[0,1]$ to the equation $2x^2(3+2x-x^2)=1$, $s_R=\frac12(s_F-s_F^2)$ and the competitive ratio equals $1+s_R+s_F$.

\begin{figure}[htbp]
    \centering
\begin{tikzpicture}
    \begin{axis}[
	axis lines=middle,
        \SetAxisLabels{1}, 
		clip=false,
		xticklabel=\empty,
        yticklabel=\empty,
        xmin=0, xmax=.7, ymin=0, ymax=1.45,
        samples=100,
		width=.55\textwidth,
		height=.45\textwidth
    ]

        \PlotPiecewisel{0}{0.117}{0.372}{.7}{1.25}{1}{0.372}
        
        \AxisTick{x}{0.372}{0.372}{\raisebox{0pt}{$s_F\approx0.372$}} 
\AxisTick{x}{0.117}{1}{$s_R\approx0.117$} 
        \AxisTick{y}{0.372}{0.372}{$0.372$}
        \AxisTick{y}{1}{0.117}{1}
        \AxisTick{y}{1.25}{0}{$\infty$}
    \end{axis}
\end{tikzpicture}
\hfill
\begin{tikzpicture}
    \begin{axis}[
	axis lines=middle,
        \SetAxisLabels{2}, 
		clip=false,
		xticklabel=\empty,
        yticklabel=\empty,
        xmin=0, xmax=2.5, ymin=0.9, ymax=1.7,
        samples=100,
		width=.55\textwidth,
		height=.45\textwidth
    ]
       \PlotPiecewiser{0}{0.372}{1}{2.5}{1}{(1+2*0.372)/x-(0.372*1.372)/x/x}{1+0.372+0.117+(0.372*(1-0.372)-2*0.117)/x}

        \AxisTick{y}{1.489}{1}{1.489}
        \AxisTick{y}{1}{0}{1}
		\AxisTick{x}{0.372}{1}{0.372}
		\AxisTick{x}{1}{1.22}{1}
    \end{axis}
\end{tikzpicture}
\caption{3-Interval optimal response function, including a renting interval (left) and expected ratio (right).}
    \label{fig:3_interval}
\end{figure}

\section{The optimal deterministic solution to the {USSR} problem}

\label{sec:conti}

We now consider general deterministic response functions. We first depict a
function achieving a competitive ratio equal to $C^*$ as defined below, before
using randomized adversaries to prove the optimality of this value.

\begin{theorem}
\label{th:ULB}
The optimal competitive ratio for the \USSR problem equals $C^*:=\frac12+\frac{\sqrt{1+\ln(2e/3)^2}}{2\ln(2e/3)} \approx 1.478$.
\end{theorem}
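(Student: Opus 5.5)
The proof has two halves: a construction showing $CR(f)\le C^*$ for an explicit legal $f$, and a Yao-type argument showing that no deterministic legal $f$ does better. I would treat them separately and let the extremal structure of the construction dictate the adversarial distribution.

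\textbf{Upper bound.} The starting point is the 3-step function of Lemma~\ref{lem:3step}. Its weakness is that $\eratiof$ is balanced only at isolated points of $[s_F,1)$, while for $t\ge1$ it is constant. The plan is to make the ratio equal to $C$ on a whole interval $[t_0,1]$ as well as on $[1,\infty)$.

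I would look for $f$ of the following shape:
\begin{itemize}
\item $f=\infty$ on $[0,s_R)$;
\item on $[s_R,s_F]$, $f$ is non-increasing, taking values from $1$ down to some $t_0\ge s_F$;
\item $f=s_F$ on $(s_F,\infty)$.
\end{itemize}
Write $g=f^{-1}$ on $[t_0,1]$. For $t\in[t_0,1]$, the signals with $f(s)\le t$ are exactly $s\in[g(t),s_F]$, so
\[
t\,\ecostf=\int_0^{g(t)} t\,ds+\int_{g(t)}^{s_F}(1+f(s))\,ds+(t-s_F)(1+s_F).
\]
Imposing $\ecostf=C\,t$ and differentiating in $t$ should give a first-order ODE in $g$; integrating it should produce the logarithm that appears in $C^*$.

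For $t\ge1$, constancy of the ratio is the condition $\int_{s_R}^{s_F}f=s_R+s_F^2$ mentioned before Lemma~\ref{lem:3step}, and this fixes $s_R$. The free parameters are $s_F,t_0,C$. The boundary conditions are continuity of the ratio at $t_0$ and at $1$, together with $\eratiof\le C$ on $[s_F,t_0]$, which is the explicit quadratic-in-$1/t$ expression from Lemma~\ref{lem:cr1.5}. Solving this system should yield $s_F$, $t_0$ and $C=C^*$, with the constant $\ln(2e/3)$ coming from the ratio of endpoints $1/t_0$ or $s_F/g(\cdot)$ in the integrated ODE. A final check confirms $\eratiof\le C^*$ everywhere, including $t<s_F$ where the ratio is $1$.

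\textbf{Lower bound.} Here I would use Yao's principle with an adversary that depends only on $s_F(f)$, as announced in the introduction.

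By Remark~\ref{rmk:sfcr} and Lemma~\ref{lem:cr1.5}, it suffices to consider $s_F<0.5$. For a fixed $s_F$, the adversary mixes:
\begin{itemize}
\item a density on $t\in[t_0,1]$, namely the one that makes the algorithm's pointwise choice of $f(s)$ indifferent (the dual of the ODE above);
\item an atom at large $t$, penalising $s_F$ through the $(1+s_F)$ term;
\item possibly an atom at $t=s_F^+$.
\end{itemize}
Exchanging the order of integration, the expected ratio becomes $\int_0^{s_F} \Phi(s,f(s))\,ds$ plus terms depending only on $s_F$. I would then minimise pointwise over $f(s)\in[s,\infty]$. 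The adversary density is chosen so that this minimum is at least $C^*$ as a function of $s_F$ alone, and a one-variable minimisation over $s_F$ finishes the argument.

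\textbf{Main obstacle.} The crux is finding the adversarial density that makes the pointwise minimisation clean, uniformly over all $s\le s_F$ and all choices $f(s)\in\{\text{buy early},\text{rent},\text{buy late}\}$. I would first guess it as the density for which the derivative of the pointwise objective in $f(s)$ vanishes identically on $[t_0,1]$ (complementary slackness with the upper-bound construction). I would then verify by cases that the endpoints $f(s)=s$ and $f(s)=\infty$ are never better.
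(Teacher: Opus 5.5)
Your upper bound follows the paper's construction (Lemma~\ref{lem:UB}) but is left unsolved. One correction: $\ln(2e/3)$ does not come from an endpoint ratio. The ODE for $h=f^{-1}$ is $h-h'=2Ct-(1+s_F)$, and with the condition at $t=1$ it gives $h(t)=2Ct+2C-(1+s_F)-3Ce^{t-1}$. The two matching conditions at $t_0$ then leave one degree of freedom. That freedom is fixed by two constraints: monotonicity, $h'(t_0)=C(2-3e^{t_0-1})\le0$, and $t_0\le\frac{2s_F(1+s_F)}{1+2s_F}$ (no peak inside $[s_F,t_0]$). Both hold with equality at the optimum, and the first is where $\ln(2e/3)$ comes from.

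The genuine gap is the lower bound. Its entire content is the adversary, and the ingredients you list cannot give $C^*$. At the optimal fallback $s_F\approx0.379$ your adversary must be tight, because the optimal $f$ has this fallback. By your own complementary-slackness heuristic it must therefore live on $[s^*,\infty)$ with $s^*=\ln(2e/3)$, where that $f$ has ratio $C^*$, and it must make that $f$ a best response. That $f$ rents on $[0,s_R)$ with $s_R>0$ and buys at every $x\in(s^*,1]$. So $\Delta(x):=\mathbb E[(1+x-T)\mathds 1_{\{T\ge x\}}]$ must vanish on $(s^*,1]$ and be nonnegative everywhere.
\begin{itemize}
\item \emph{Missing atom at $s^*$.} For $x\le s^*$ you have $\Delta(x)=1+x-\mathbb E[T]$. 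If $T$ has no atom at $s^*$, continuity forces $\Delta(x)=x-s^*<0$, so buying right at the signal beats renting. An atom at $s^*$ of mass $1+s^*-\mathbb E[T]\ge s^*$ is therefore indispensable, and your ansatz has none.
\item \emph{Atom at large $t$.} $\Delta(1)=\mathbb E[(2-T)\mathds 1_{\{T\ge1\}}]=0$ forces the mass beyond $1$ to have conditional mean exactly $2$. An atom at large $t$ makes buying at time $1$ strictly better than renting, so the bound drops below $C^*$.
\item \emph{Atom at $s_F^+$.} It does not penalise small fallbacks: the ratio there tends to $1$ for any $f$ that does not buy before $s_F$. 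That job needs a point mass at $\frac{2s_F(1+s_F)}{1+2s_F}$, i.e.\ Lemma~\ref{lem:LB1}.
\end{itemize}

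The paper instead uses a single adversary, independent of $s_F$: $T=s^*$ with probability $s^*$, and $T=s^*+\mathrm{Exp}(1)$ otherwise. Then $\mathbb E[T]=1$, and by memorylessness $\Delta(x)=x$ for $x\le s^*$ and $\Delta(x)=0$ for $x>s^*$. So every post-signal response costs at least as much as renting. The rescaled cost $C_f(t)=t\,\ecostft{f}{t}$ then satisfies $\mathbb E[C_f(T)]\ge 1+s_F-s_F^2$, while $\mathbb E[T\OPT(T)]=s^*(2-s^*)$, with $s^*$ chosen to minimise this denominator. This bound increases in $s_F$ and the bound of Lemma~\ref{lem:LB1} decreases; they cross at $s_F\approx0.379$ with value $C^*$.
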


\subsection{Upper Bound}

The response function $f$ described further below is designed based on
intuitions gained in the previous section. We briefly expose here how it was
obtained. Similarly to the 3-step response function, $f$ equals $\infty$ up to a
value $s_R$, equals a value $s_F$ after $s_F$, and is decreasing and belongs to
$[s^*,1]$ on $[s_R,s_F)$, for some value $s^*>s_F$.

The objective of this function is to minimize the maximum expected competitive
ratio $\eratiof$ for all $t$. This can be decomposed into three properties:
\begin{itemize}
\item for $t \in [s_F,s^*)$, the ratio value is increasing then decreasing
regardless of $f$ over $[s_R,s_F]$, so $s_F$ is chosen to minimize the peak, and $s^*$ to match
the peak.
\item for $t\in [s^*,1]$, the function $f$ over $[s_R,s_F]$ is designed to achieve a
constant competitive ratio.
\item for $t>1$, the competitive ratio should be constant as well,
which is equivalent to verifying $\int_{s_R}^{s_F} f=s_R+s_F^2$, in which case the
competitive ratio equals $1+s_R+s_F$.
\end{itemize}

The first constraint implies a first lower bound on the competitive ratio,
formalized in the next section as Lemma~\ref{lem:LB1}, which is decreasing with
$s_F$. The last constraint, on the opposite, forces $s_R$ and $s_F$ to be small
in order to obtain a good competitive ratio. And for a fixed $s_F$, the
value of $s_R$ is constrained by the value of the integral of $f$ and its shape
forced by the second constraint. The solution was obtained as a compromise
between these bounds.

The optimal response function is defined by the following values:
\begin{align*}
s^* &= \ln \left(\frac{2e}{3} \right) \approx0.595 
&s_F &= \frac{\sqrt{1+{s^*}^2}-(1-s^*)}{2} \approx 0.379 \\
C^* &= \frac 12 + \frac{\sqrt{1+{s^*}^2}}{2s^*} \approx 1.478
&s_R &= C^*-1-s_F \approx0.099.
\end{align*}
On $[s_R,s_F)$, $f$ equals the inverse of the decreasing function $h:[s^*,1]\to[s_R,s_F]$:
$$h(t) = -3C^* e^{t-1} + 2C^*t + 2C^* - (1+s_F).$$

Formally, $f$ can be expressed using the second branch $W_{-1}$ of the Lambert function:
\begin{align*}
f(s) &= \frac{s + 1 + s_F}{2C^*} - 1 - W_{-1}\left(-\frac{3}{2}\exp\left(\frac{s + 1 + s_F}{2C^*} - 2\right)\right) \in\mathbb R \text{~~ for } s \in [s_R,s_F).
\end{align*}

See Figure~\ref{fig:conti} for an illustration, and the appendix for the detailed proof of the competitive ratio of this response function.

\begin{figure}[htbp]
    \centering
\begin{tikzpicture}
    \begin{axis}[
	axis lines=middle,
        \SetAxisLabels{1}, 
		clip=false,
		xticklabel=\empty,
        yticklabel=\empty,
        xmin=0, xmax=.65, ymin=0, ymax=1.45,
        samples=100,
		width=.55\textwidth,
		height=.45\textwidth
    ]

    	\addplot[domain=0:0.099, \thickness, blue] {1.25};
    	\addplot[domain=0.595:1, \thickness, blue] (-3*1.478*exp(x-1)+2*1.478*x+2*1.478-1-0.379,x);
    	\addplot[domain=0.379:.65, \thickness, blue] {0.379};
    \draw[dashed, very thick, blue] (axis cs:0.099,1.25) -- (axis cs:0.099,{1});
    \draw[dashed, very thick, blue] (axis cs:0.379,{0.595}) -- (axis cs:0.379,{0.379});

        \AxisTick{x}{0.379}{0.379}{\raisebox{0pt}{$s_F\approx0.379$}} 
\AxisTick{x}{0.099}{1}{$s_R\approx0.099$} 
        \AxisTick{y}{0.595}{0.379}{$\ln(\frac{2e}3)\approx0.595$~~~~~~~~~}
        \AxisTick{y}{1}{0.099}{1}
        \AxisTick{y}{1.25}{0}{$\infty$}
    \end{axis}
\end{tikzpicture}
\hfill
\begin{tikzpicture}
    \begin{axis}[
	axis lines=middle,
        \SetAxisLabels{2}, 
		clip=false,
		xticklabel=\empty,
        yticklabel=\empty,
        xmin=0, xmax=1.2, ymin=0.9, ymax=1.7,
        samples=100,
		width=.55\textwidth,
		height=.45\textwidth
    ]
       \PlotPiecewiser{0}{0.379}{0.595}{1.2}{1}{(1+2*0.379)/x-(0.379*1.379)/x/x}{1.478}
        
        \AxisTick{y}{1.478}{0.6}{1.478}
        \AxisTick{y}{1}{0}{1}
		\AxisTick{x}{0.379}{1}{0.379}
		\AxisTick{x}{0.595}{1.478}{\raisebox{0pt}{0.595}}
		\AxisTick{x}{1}{.9}{1}
    \end{axis}
\end{tikzpicture}

\caption{Optimal response function  (left) and expected ratio (right).}
    \label{fig:conti}
\end{figure}

\subsection{Lower Bound}

We establish two independent lower bounds on the competitive ratio of a given
response function $f$ depending on the value of $s_F$. The first one,
Lemma~\ref{lem:LB1}, as mentioned in the previous section, penalizes small
values of $s_F$ by exhibiting a stopping time with a high cost. The second one,
detailed in the main proof of Theorem~\ref{th:det-lower}, uses a stochastic
adversary targeting medium values of $s_F$. For such response functions, the
expected cost on this randomized adversary will be high, which lower bounds
their competitive ratio. The resort to this randomized adversary allows us to
provide a generic proof independent of the actual response function, which is
surprising. It can be explained by a usage similar to Yao's principle, with an
analogy between the signal randomness and a pure randomized algorithm
restricted to a single value of $s_F$. Note that we already determined that
large values of $s_F$ lead to large competitive ratios.

\begin{lemma}
\label{lem:LB1}
For any legal response function $f$ with $0<s_F<1/\sqrt 2$, we have
$CR(f)\geq 1+\frac{1}{4s_F(1+s_F)}$.
\end{lemma}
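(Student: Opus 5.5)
We exhibit a single horizon $t\in[s_F,1)$ at which the expected ratio is already at least $1+\frac{1}{4s_F(1+s_F)}$, whatever $f$ does on $[0,s_F]$. Fix $t\in[s_F,1)$. The optimum then pays $t$.

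The first step is a pointwise lower bound on the cost. If the signal arrives at $s\le s_F$, the algorithm pays at least $t$: either $f(s)>t$ and it rents throughout, or it buys at some time $x\ge s$ and pays $1+x\ge 1>t$. If $s>s_F$, legality gives $f(s)=s_F\le t$, so the algorithm buys at $s_F$ and pays exactly $1+s_F$. Hence
\begin{align*}
\eratiof \;\ge\; \frac1t\left(\frac{s_F}{t}\, t+\frac{t-s_F}{t}(1+s_F)\right)
=\frac{1+2s_F}{t}-\frac{s_F(1+s_F)}{t^2},
\end{align*}
which is the same expression computed in the proof of Lemma~\ref{lem:cr1.5}, now read as a lower bound valid for every legal $f$.

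The second step is to maximize this bound over $t$. Write $u=1/t$, so the bound is the concave quadratic $(1+2s_F)u-s_F(1+s_F)u^2$. It is maximized at $u=\frac{1+2s_F}{2s_F(1+s_F)}$, that is, at
\begin{align*}
t^*=\frac{2s_F(1+s_F)}{1+2s_F},
\end{align*}
where its value is
\begin{align*}
\frac{(1+2s_F)^2}{4s_F(1+s_F)} = 1+\frac{1}{4s_F(1+s_F)},
\end{align*}
using $(1+2s_F)^2=1+4s_F(1+s_F)$.

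It remains to check that $t^*$ lies in $[s_F,1)$. The inequality $t^*\ge s_F$ reduces to $2+2s_F\ge 1+2s_F$, which always holds. The inequality $t^*<1$ reduces to $2s_F+2s_F^2<1+2s_F$, i.e.\ $s_F<1/\sqrt2$, which is exactly the hypothesis. So $CR(f)\ge \eratioft{f}{t^*}\ge 1+\frac{1}{4s_F(1+s_F)}$.

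The only delicate point is the cost bound for $s\le s_F$ when $f(s)\le t$: buying there costs $1+f(s)\ge 1>t$. This uses $t<1$, so the bound holds with no assumption on $f$ below $s_F$ beyond legality.
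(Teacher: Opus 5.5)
Your proposal is correct and follows essentially the same route as the paper: a pointwise cost bound at a fixed horizon $t\in[s_F,1)$ yielding $\frac{1+2s_F}{t}-\frac{s_F(1+s_F)}{t^2}$, then evaluation at $t^*=\frac{2s_F(1+s_F)}{1+2s_F}$, which lies in the admissible range exactly when $s_F<1/\sqrt2$. Your explicit checks that $t^*\geq s_F$ and $t^*<1$ just spell out what the paper asserts in one line.
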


\begin{proof}
Let $t\in[s_F,1]$. With probability $(t-s_F)/t$, the signal arrives
after $s_F$, so the algorithm buys at $s_F$ and pays $1+s_F$.
Otherwise its cost is at least $t$, since buying costs at least $1\geq t$.
Therefore,
$$
\eratiof\geq
\frac1t\left(\frac{s_F}{t}\cdot t+
\frac{t-s_F}{t}\cdot(1+s_F)\right)
=\frac{1+2s_F}{t}-\frac{s_F(1+s_F)}{t^2}.
$$
The right-hand side is maximized at
$t^*=\frac{2s_F(1+s_F)}{1+2s_F}$, which belongs to $[s_F,1]$ when
$s_F<1/\sqrt 2$. Evaluating at $t^*$ gives
$$
CR(f)\geq\eratioft{f}{t^*}
\geq\frac{(1+2s_F)^2}{4s_F(1+s_F)}
=1+\frac{1}{4s_F(1+s_F)}.
$$
\end{proof}

We are now ready to prove the main lower bound.

\begin{theorem}
\label{th:det-lower}
Every deterministic algorithm for the \USSR problem has competitive ratio
at least $C^*$.
\end{theorem}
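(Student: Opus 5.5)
The proof splits on the value of $s_F$, which is the only parameter the adversary really needs to target. By Remark~\ref{rmk:sfcr} we may assume $s_F\le 1$, since otherwise $CR(f)>2>C^*$. For small fallback times we use Lemma~\ref{lem:LB1}. Its bound $1+\frac{1}{4s_F(1+s_F)}$ is decreasing in $s_F$. It equals $C^*$ exactly at $\bar s:=\frac{\sqrt{1+{s^*}^2}-(1-s^*)}{2}\approx 0.379$, because $4\bar s(1+\bar s)={s^*}^2+\ldots$ simplifies to $\frac{2s^*}{\sqrt{1+{s^*}^2}-s^*}$ and gives $1+\frac{1}{4\bar s(1+\bar s)}=C^*$. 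I would verify this algebraic identity first. It yields $CR(f)\ge C^*$ whenever $s_F\le\bar s$. The remaining case $s_F\in(\bar s,1]$ is handled by a randomized adversary, in the spirit of Yao's principle.

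For $s_F>\bar s$, I would construct a distribution $\mathcal D$ on $t$. It has a density $g$ on $[s^*,1]$ and an atom (or a limit of mass) at some large $T\to\infty$. The target is to show $E_{t\sim\mathcal D}[\eratiof]\ge C^*$, which bounds $\sup_t$. Writing $E_t[\eratiof]=\int\!\!\int \frac{g(t)}{t\min(1,t)}\cost(f,s,t)\,ds\,dt$ and swapping the integrals turns the quantity into $\int_0^\infty \Phi_s(f(s))\,ds$. Here $\Phi_s(x)=\int_{t\ge s} w(t)\cost(x,s,t)\,dt$ is the weighted cost of buying at $x$ given signal $s$, with weight $w(t)=g(t)/(t\min(1,t))$. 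The contribution of signals $s>s_F$ is forced, since they buy at $s_F$, and it depends only on $s_F$. For $s\le s_F$, the algorithm picks any $x=f(s)\ge s$.

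The key design choice is to make $\Phi_s$ constant in $x$ on $[s^*,1]\cup\{\infty\}$, which is the classic indifference argument for randomized ski rental. Differentiating $\int_s^x w(t)t\,dt+(1+x)\int_x^\infty w$ in $x$ gives the condition $\int_x^\infty w(t)\,dt=w(x)$ on $[s^*,1]$. This forces an exponential tail, $w(x)\propto e^{-x}$ up to the contribution of the atom. The atom at $T$ is then calibrated so that renting forever ($x=\infty$) costs the same as buying at $1$. This matches the appearance of $e^{t-1}$ in $h$ and $\ln(2e/3)$ in $s^*$. Outside that set, buying at $x\in[s,s^*)$ should be no better, because $t$ has no mass below $s^*$ and earlier buying only adds the unit cost earlier. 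Buying at $x>1$ is dominated by $x=1$. After this, $E_t[\eratiof]$ depends on $f$ only through $s_F$. Call it $\Psi(s_F)$; it is an explicit expression in $s_F$, $s^*$ and the normalization of $\mathcal D$.

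The last step is to show $\Psi(s_F)\ge C^*$ for all $s_F\in[\bar s,1]$. This is presumably tight at $s_F=\bar s$, where the optimal $f$ of the upper bound is indifferent everywhere. I expect the main obstacle to be choosing the atom weight and normalization so that two things hold at once: indifference holds, including the comparison with $x=\infty$, and the resulting one-variable function $\Psi$ is minimized at $\bar s$ with value exactly $C^*$. If one distribution cannot cover all $s_F\ge\bar s$, I would first check monotonicity of $\Psi$. Failing that, I would let the support $[s^*,1]$ depend on $s_F$, for example starting at $\max(s^*,s_F)$, and redo the computation.
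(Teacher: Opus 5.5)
Your architecture matches the paper's: Lemma~\ref{lem:LB1} for $s_F\le\bar s$, and a Yao-type random horizon otherwise. Your weight $w(t)=g(t)/(t\min(1,t))$ is, up to normalization, the law of the paper's $T$. The paper bounds $\mathbb E[C_f(T)]/\mathbb E[T\OPT(T)]$, which is your $\mathbb E_{\mathcal D}[\eratiof]$ with $\mathcal D\propto t\OPT(t)\,d_T$. Your indifference equation $\int_x^\infty w=w(x)$ is also correct. However, the concrete adversary you then build fails in two places, and both are load-bearing.

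\emph{The mass at $T\to\infty$.} Indifference on $[s^*,1]$ forces $w(x)=ce^{-x}$ there, and $w$-mass exactly $c/e$ beyond $1$. Now $\Phi_s(\infty)-\Phi_s(1)=\int_{(1,\infty)}w(t)(t-2)\,dt$, which for a single atom at $T$ equals $\frac ce(T-2)$. So renting forever ties with buying at $1$ only if the atom sits at $t=2$. No choice of weight can fix this, because the weight is already forced. Worse, that atom carries $\mathcal D$-mass $cT/e$. As $T\to\infty$ the normalized $\mathcal D$ therefore concentrates on it, and $\mathbb E_{\mathcal D}[\eratiof]\to\lim_{t\to\infty}\eratiof$. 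For instance, $f\equiv1$ on $[0,s_F)$ is then only certified at $1+s_F<C^*$ for $s_F\in(\bar s,C^*-1)$. The paper instead keeps an $\mathrm{Exp}(1)$ tail on all of $(s^*,\infty)$. By memorylessness this gives $\Delta(x)=\mathbb E[(1+x-T)\mathbf 1_{\{T\ge x\}}]=0$ for every $x>s^*$, including $x=\infty$. An atom at exactly $t=2$ with the same mass $(1-s^*)e^{s^*-1}$ would also work.

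\emph{Early purchases.} Your claim that buying at $x\in[s,s^*)$ is no better is false. With no mass below $s^*$, such a purchase always happens and costs $1+x$, so $\Phi_s(x)=(1+x)\int w$ is \emph{increasing} in $x$. Buying immediately at the signal is thus the best option in this range. It beats renting forever whenever the $w$-mean horizon exceeds $1+s$. With the exponential tail alone ($T=s^*+Y$, mean $1+s^*$), the response $f(s)=s$ on $[0,s_F)$ brings the bound down to about $1.39$ at $s_F=0.4$.

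The missing idea is the paper's point mass at $s^*$ with probability exactly $s^*$. It makes $\mathbb E[T]=s^{*2}+(1-s^*)(1+s^*)=1$, hence $\Delta(x)=x\ge0$ for $x\le s^*$. With both fixes, every signal $s<s_F$ contributes at least $\mathbb E[T]=1$. Your $\Psi$ then becomes $\frac{1+s_F-s_F^2}{s^*(2-s^*)}$ for $s_F\le s^*$, and it meets Lemma~\ref{lem:LB1} at $\bar s$ with value $C^*$. The paper simply discards $s_F\ge1/2$ via $CR(f)\ge1+s_F$.
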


\begin{proof}
Fix a legal response function $f$. We may assume $0<s_F<1/2$ as the competitive
ratio would be larger than $1.5$ otherwise, see Remark~\ref{rmk:sfcr}.

We consider a variant of the expected cost achieved by $f$ by scaling it by a factor $t$, which allows the use of expectation over a randomized adversary	
$$
\forall t>0, ~~C_f(t)=t\,\ecostf=\int_0^t\costf\,ds \leq CR(f)\cdot t\OPT(t).
$$
For $t\geq s_F$, separating signals before and after the fallback time $s_F$ gives
\begin{equation*}
C_f(t)=(t-s_F)(1+s_F)+\int_0^{s_F}\costf\,ds.
\end{equation*}

We now design a stochastic adversary depicted by a random variable $T$ representing the time horizon $t$. One objective is to ensure that, for all $s<s_F$, we have $E_T(\costfst f s T)\geq1$.
Let $s^*:=\ln(2e/3)\approx0.595$ and define $T$ as follows:
\begin{align*}
T &= \begin{cases}
s^* & \text{with probability } s^* \\
s^* + Y & \text{with probability } 1-s^*  ~~~~~ (\text{with }Y\sim \mathit{Exp}(1), \text{ the exponential distribution}).
\end{cases}
\end{align*}

Therefore, for $t>s^*$, the density of $T$ equals $(1-s^*)e^{s^*-t}$.
We then have $T\geq s^*>s_F$ and:
$$
\mathbb E[T]=s^*\cdot s^*+(1-s^*)(s^*+1)=1.
$$

By definition of $C_f$, we have:
$$ CR(f) \geq {E[C_f(T)]}~/~{E[T\OPT(T)]}.$$

We therefore focus on $E[C_f(T)]$. As $T\geq s_F$ for all valuations, we have:
$$E[C_f(T)] = (E[T]-s_F)(1+s_F)+\int_0^{s_F}E_T[\cost(f,s,T)]\,ds.$$

Consider the response function $g(s)=\infty$ on $[0,s_F]$ and $s_F$ otherwise,
so $\cost(g,s,T)=T$ for all $s<s_F$. We now show that the expected cost
associated to $f$ on the randomized adversary $T$ is higher than the one of $g$.
We define for $0<s<s_F$:
$$\Delta(f(s)) := E_T(\cost(f,s,T))-E_T(\cost(g,s,T)).$$

If $f(s)>T$, then $f$ and $g$ lead to the same cost. Otherwise, $f$ pays
$1+f(s)$ and $g$ pays $T$. Therefore, for  $x\in\mathbb R^+$, we have:
\begin{align*}
\Delta(x)
&=\mathbb E[(1+x-T)\mathds 1_{\{T\geq x\}}]
=(1+x)\mathbb P(T\geq x)-\mathbb E[T\mathds 1_{\{T\geq x\}}].
\end{align*}

If $x\leq s^*$, then $T\geq x$, so $\Delta(x)=1+x-E[T]=x\geq0$.

If $x>s^*$, direct integration gives:
$$
\mathbb P(T\geq x)=(1-s^*)e^{s^*-x},
\qquad
\mathbb E[T\mathds 1_{\{T\geq x\}}]
=(x+1)(1-s^*)e^{s^*-x},
$$
and hence $\Delta(x)=0$.
Thus $\Delta(x)\geq0$ for every purchase time.

 Consequently,
$$
\mathbb E_T[\costfst{f}{s}{T}]
=\mathbb E_T[\costfst g s T]+\Delta(f(s))\geq1.
$$

We therefore obtain:
\begin{equation}
\label{eq:det-average-cost}
\mathbb E[C_f(T)]
\geq(1-s_F)(1+s_F)+\int_0^{s_F}1\,ds
=1+s_F-s_F^2.
\end{equation}

It remains to compute $E[T\OPT(T)]$, using $3e^{s^*-1}=2$.
\begin{align*}
\mathbb E[T\OPT(T)]
&= E(\min(T,T^2))\\
&={s^*}^2\cdot s^* + \int_{s^*}^{1} t^2(1-s^*)e^{s^*-t}\,dt + \int_1^\infty (1-s^*)te^{s^*-t}\,dt \\
&={s^*}^3 + (1-s^*)[(t^2+2t+2)e^{s^*-t}]_1^{s^*} + (1-s^*)[(t+1) e^{s^*-t}]_\infty^1 \\
&= {s^*}^3 + (1-s^*) \left( {s^*}^2+2s^*+2 - 5e^{s^*-1} +2e^{s^*-1} \right)\\
&= {s^*}^3 + (1-s^*) \cdot s^*(s^*+2)
=s^*(2-s^*).
\end{align*}
Therefore, we obtain the lower bound:
$$CR(f) \geq \frac{\mathbb E(C_f(T))}{\mathbb E[T\OPT(T)]}\geq \frac{1+s_F-s_F^2}{s^*(2-s^*)}.$$
Combining it with Lemma~\ref{lem:LB1} and the initial remark regarding
$s_F<1/2$, we get:
$$
CR(f)\geq\max\left\{
1+\frac{1}{4s_F(1+s_F)},\;
\frac{1+s_F-s_F^2}{s^*(2-s^*)}
\right\}.
$$
On $(0,1/2)$, the first bound is decreasing and the second is increasing.
They are both equal to $C^*=\frac12 + \frac{\sqrt{1+{s^*}^2}}{2s^*}$ at $
s_F=\frac12({\sqrt{1+{s^*}^2}+s^*-1})\approx0.379.
$

\end{proof}

The probability distribution of $T$ has been computed by making buying times
irrelevant after some cutoff $u$. Buying at time $x$ adds $1+x-T$ over
the cost of renting assuming $T\geq x$. An exponential tail of rate
$1$ makes the expected remaining time equal to the buying cost $1$,
so this additional cost averages to zero for every $x>u$.

$T$ is then defined by a point probability $p$ at $u$ followed by an exponential
distribution of rate 1. Always renting  costs $\mathbb E[T]=u+1-p$, whereas
buying immediately costs $1$ and buying before $u$ is suboptimal. Balancing
these costs implies $p=u$.

For $u\in[s_F,1]$, this construction always gives the same numerator in the
competitive ratio lower bound, $1+s_F-s_F^2$. The value of $u$ is then
determined to $\ln(2e/3)$ as it minimizes the denominator, regardless of $s_F$.

\section{The optimal randomized solution to the USSR problem} 
\label{sec:rando} 
 
In this section, we allow the algorithm to use randomness, in addition to the
random signal. We characterize randomized algorithms as a deterministic response
function depending on the valuation of a random variable $\Omega$, with possibly
variable fallback times.
$$ 
CR(\mathcal A)=\sup_{t>0} 
\frac{\mathbb E_\Omega[\ecostft{f_\Omega}{t}]}{\OPT(t)}. 
$$ 
Randomization improves the optimal ratio from $C^*\approx1.478$ to approximately
$1.438$. 

To prove the lower bound, we use a technique similar to the
deterministic lower bound using a randomized adversary. The difference is that
there is no randomized counterpart for Lemma~\ref{lem:LB1}, targeting small
values of $s_F$. Hence, the adversary is required to target a larger interval of
fallback times, decreasing the achieved lower bound.

We first describe the optimal solution, and defer the proof of its expected
competitive ratio to the appendix. This algorithm, once the random bits are
evaluated, follows the same structure as the 3-step function of
Lemma~\ref{lem:3step}, with a renting step until $s_R$ and a constant value
$b_1$ before a fallback $s_F$. As previously, the competitive ratio for large
$t$ approaches $1+\mathbb E(s_R+s_F)$, hence the incentive to minimize these
values. Simultaneously, low time horizons prevent $s_F$ from being too low. Ensuring
that the expected competitive ratio is constant for all time horizons, which intuitively allows us to minimize the peak,  gives
further constraints on the distributions of $b_1$ and $s_F$ and the expected values of $b_1$, $s_F$ and $s_R$.

\begin{theorem}
\label{th:rand-upper}
There exists a randomized algorithm for the \USSR problem with competitive
ratio
$
C_{\mathrm{rand}}
=
\frac{4e^2}{4e^2-9}
\approx1.4378.
$
\end{theorem}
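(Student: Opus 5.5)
The plan is to restrict to algorithms that, once the random bits $\Omega$ are drawn, are 3-step functions as in Lemma~\ref{lem:3step}, but with a buying value $b$ that need not equal $1$. Such a function is parametrized by a triple $(s_R,b,s_F)$ with $s_R\le s_F\le b\le 1$: it equals $\infty$ on $[0,s_R)$, $b$ on $[s_R,s_F)$, and $s_F$ on $(s_F,\infty)$. We then choose a joint distribution of the triple that makes the expected ratio exactly $C_{\mathrm{rand}}$ for every $t$. As in the proof of Theorem~\ref{th:det-lower}, I would work with the scaled cost $C(t)=t\,\ecostft{f}{t}$. For a fixed triple, a direct case analysis gives
\begin{align*}
C(t)&=t^2 && t<s_F,\\
C(t)&=s_Ft+(t-s_F)(1+s_F) && s_F\le t<b,\\
C(t)&=s_Rt+(s_F-s_R)(1+b)+(t-s_F)(1+s_F) && t\ge b.
\end{align*}
Hence $C$ is piecewise linear after $s_F$, with slopes $1+2s_F$ and then $1+s_R+s_F$, and with an upward jump of $s_F-s_R$ at $t=b$. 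The target is $\mathbb E_\Omega[C(t)]=C_{\mathrm{rand}}\,t\min(1,t)$ for all $t>0$.

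\textbf{Step 1 ($t\ge 1$).} For $t\ge1$, every triple is in its last regime, so $\mathbb E[C(t)]$ is affine in $t$. Matching $C_{\mathrm{rand}}\,t$ gives two moment conditions:
\begin{align*}
\mathbb E[1+s_R+s_F]&=C_{\mathrm{rand}},\\
\mathbb E\bigl[(s_F-s_R)(1+b)-s_F(1+s_F)\bigr]&=0.
\end{align*}
The second is the randomized analogue of the condition $\int_{s_R}^{s_F} f=s_R+s_F^2$ from Section~\ref{sec:step}.

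\textbf{Step 2 ($t<1$).} For $t<1$ we need $\mathbb E[C(t)]=C_{\mathrm{rand}}t^2$. I would make this tractable by coupling the parameters. Let $s_F$ have a density $\phi$ on an interval $[\alpha,\beta]$. Let $s_R=r(s_F)$ and $b=\beta(s_F)$ be deterministic functions of $s_F$; an independent density for $b$ is the alternative if coupling fails. Differentiating $\mathbb E[C(t)]$ once in $t$ produces mixed terms: $2t\,\mathbb P(s_F>t)$, the expectations of the slopes $1+2s_F$ and $1+s_R+s_F$ over the appropriate events, and the jump contributions (density of $b$ at $t$)$\times(s_F-s_R)$. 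Setting this equal to $2C_{\mathrm{rand}}t$ and differentiating once more yields a linear ODE relating $\phi$, the law of $b$, and $r$. The exponential form $e^{t-1}$ seen in $h$ of the deterministic solution suggests that the solutions are exponentials. I expect the constant $4e^2/(4e^2-9)$ to arise when boundary conditions are imposed at $t=1$ (all $b\le1$) and at the lower end of the support. In particular $C_{\mathrm{rand}}t^2$ must agree with $t^2$ at the start, which forces a point mass or a specific support endpoint. The boundary values are then plugged into the two moment conditions of Step 1.

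\textbf{Step 3 (feasibility check).} Finally, I would verify that the resulting objects are legitimate: the densities are nonnegative, the total mass is $1$, and the ordering $s_R\le s_F\le b\le1$ holds pointwise. I would also check that no regime was missed, for instance $t$ below the support of $s_F$, where the ratio is trivially $1\le C_{\mathrm{rand}}$.

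The main obstacle is Step 2: guessing a parametrization of the randomness flexible enough to flatten the ratio on all of $(0,1)$ while keeping the ODE solvable and the constraints feasible. I would first try $b$ independent of $s_F$ with its own density on $[\,\cdot\,,1]$, and $s_R$ an affine function of $s_F$, mimicking $s_R=\frac12(s_F-s_F^2)$ from Lemma~\ref{lem:3step}. If the ordering constraint fails, I would fall back to a coupled monotone map between $s_F$ and $b$.
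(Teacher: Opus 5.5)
\textbf{Verdict: genuine gap.} Your setup is correct and matches the paper's. It uses random 3-step responses and the scaled cost $C(t)$; your piecewise formulas, the two slopes and the jump $s_F-s_R$ at $t=b$ are all right. The target is $\mathbb E[C(t)]=C_{\mathrm{rand}}\,t\min(1,t)$, and your two moment conditions for $t\geq1$ are correct. The ansatz you would try first ($b$ independent of $s_F$, $s_R$ affine in $s_F$) is exactly the one the paper uses.

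The problem is that the proposal stops where the proof has to begin. You never write down a distribution, so nothing shows that the value $4e^2/(4e^2-9)$ is attainable. ``I expect the constant to arise'' restates the theorem rather than proving it, and Step~3 has nothing to check.

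The missing structural idea is the following.
\begin{itemize}
\item With $b$ independent of $s_F$, the ordering $s_F\leq b$ you impose forces $\sup\operatorname{supp}(s_F)\leq\inf\operatorname{supp}(b)$.
\item Your equalization target forces both supports to meet at a single cutoff $u$: on a gap between them, $\mathbb E[C(t)]$ would be affine rather than $Ct^2$.
\item It also forces the support of $s_F$ to start at $0$, since below that support $\mathbb E[C(t)]=t^2$. No point mass is involved; an atom at $0$ would even make the ratio unbounded.
\end{itemize}
So $s_F$ lives on $[0,u]$ and $b$ on $[u,1]$. This decouples your ODE into two scalar equations, and $s_R$ enters only through its mean.

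Concretely, write $C$ for the target constant, $\phi,\Phi$ for the density and distribution function of $s_F$, and $\psi$ for the density of $b$. The derivation runs as follows.
\begin{enumerate}
\item On $[0,u]$, differentiating $t^2+\int_0^t\phi(x)(t-x)(1+x-t)\,dx=Ct^2$ twice gives $\phi-2\Phi=2(C-1)$. Hence $\phi(x)=2(C-1)e^{2x}$, and normalization gives $C=e^{2u}/(e^{2u}-1)$.
\item Matching value and slope at $u$ gives $1+2\mathbb E[s_F]=2Cu$ and $\mathbb E[s_F(1+s_F)]=Cu^2$. So on $[u,1]$ the rent-plus-fallback part is the tangent line $2Cut-Cu^2$, and the $b$-correction must equal $C(t-u)^2$.
\item That requirement gives $\psi'-\psi=2C/\mathbb E[s_F-s_R]$ with $\psi(u)=0$, hence $\psi\propto e^{x-u}-1$.
\item Your first moment condition, combined with $\mathbb E[s_F]=Cu-\frac12$, yields $\mathbb E[s_F-s_R]=C(2u-1)$. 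Therefore $\psi(x)=2(e^{x-u}-1)/(2u-1)$.
\item Requiring $\int_u^1\psi=1$ gives $e^{1-u}=3/2$, so $u=\ln(2e/3)$ and $C=4e^2/(4e^2-9)$.
\item Any $s_R\in[0,s_F]$ with the right mean works, e.g.\ $s_R=\kappa s_F$ with $\kappa=\frac{C(1-u)-1/2}{Cu-1/2}\approx0.234$.
\item Your second moment condition then holds automatically, by continuity of $\mathbb E[C(t)]$ at $t=1$.
\end{enumerate}

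This is exactly the paper's algorithm, with $u=s^*$. The paper then verifies it directly in the three regimes $t\leq s^*$, $s^*\leq t\leq1$ and $t\geq1$. An explicit solution of this kind, together with its verification, is what your Steps~2--3 must supply before the proposal becomes a proof.
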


The randomized response function is defined as follows. Set
$
s^*=\ln(2e/3).
$
Draw the fallback $s_F$ in $[0,s^*]$ with density
$
2(C_{\mathrm{rand}}-1)e^{2x}
$, and set:
$$
s_R = \frac{C_{\mathrm{rand}}(1-s^*)-1/2}
     {C_{\mathrm{rand}}s^*-1/2} s_F
\approx0.234 s_F.
$$

Independently of $s_F$, draw a middle buying time $b_1$ in $[s^*,1]$
with density:
$
\frac{2(e^{x-s^*}-1)}{2s^*-1}
$.

\begin{figure}[htbp]
\centering
\begin{tikzpicture}
    \begin{axis}[
	axis lines=middle,
        xlabel=Value, 
		ylabel =Probability density,
		clip=false,
		xticklabel=\empty,
        yticklabel=\empty,
        xmin=0, xmax=1, ymin=0, ymax=6,
        samples=100,
		width=.55\textwidth,
		height=.45\textwidth
    ]

    \addplot[domain=0.595:1, \thickness, red] {(2*(exp(x-0.595)-1))/(2*0.595-1)} node[at end, right] {$b_1$};
    \addplot[domain=0:0.595, \thickness, blue] {2*0.4378*exp(2*x)} node[at end, right] {$s_F$};
    \addplot[domain=0:0.139, \thickness, green!50!black] {2*0.4378*exp(2*(x/0.234))} node[at end, right] {$s_R$};
        
        \AxisTick{y}{5}{0}{5}
        \AxisTick{y}{2}{0}{2}
		\AxisTick{x}{0.595}{2.9}{$s^*$}
		\AxisTick{x}{1}{5.3}{1}
		
    \end{axis}
\end{tikzpicture}
\caption{Probability density of the independent parameters $s_F$ and $b_1$, and $s_R$ depending on $s_F$.}
    \label{fig:densityLBR}
\end{figure}

Both draws are made before the signal arrives, and resulting probability densities are illustrated on Figure~\ref{fig:densityLBR}. Once $s_F$ and $b_1$ are
fixed, the response is deterministic and is given by:
\begin{equation}
\label{eq:rand-upper-response}
f(s)=
\begin{cases}
\infty,&0\leq s<s_R,\\
b_1,&s_R\leq s<s_F,\\
s_F,&s\geq s_F.
\end{cases}
\end{equation}
Thus, after a very early signal the algorithm rents forever, after an
intermediate signal it buys at the drawn value of $b_1$, and if no signal arrives before
the drawn value of $s_F$, it buys at the fallback.

\begin{theorem} 
\label{th:rand-lower} 
Every randomized algorithm for the \USSR problem has competitive ratio at 
least $C_{\mathrm{rand}}$. 
\end{theorem}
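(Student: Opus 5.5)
The plan is to reuse the Yao-type argument from the proof of Theorem~\ref{th:det-lower}, with a richer randomized adversary. A randomized algorithm is a distribution over deterministic legal response functions $f_\Omega$. Hence for any random horizon $T$ independent of $\Omega$,
$$CR(\mathcal A)\geq \frac{\mathbb E_\Omega\mathbb E_T[C_{f_\Omega}(T)]}{\mathbb E[T\OPT(T)]},$$
with $C_f(t)=t\,\ecostf$ as before. It therefore suffices to exhibit one distribution of $T$ such that $\mathbb E_T[C_f(T)]\geq C_{\mathrm{rand}}\,\mathbb E[T\OPT(T)]$ for \emph{every} deterministic legal $f$, whatever its fallback $s_F$. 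Unlike the deterministic case, Lemma~\ref{lem:LB1} is not available, so the adversary alone must penalize small fallbacks. I would therefore let $T$ put mass below $s^*$ as well.

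\textbf{Step 1: shape of $T$.} I take $T$ with an exponential tail of rate $1$ beyond the cutoff $s^*=\ln(2e/3)$: density $c\,e^{s^*-t}$ for $t>s^*$. Below $s^*$, $T$ has a density $\rho$ on $(0,s^*]$, possibly with an atom at $s^*$. As explained after Theorem~\ref{th:det-lower}, the tail makes $\Delta(x)=\mathbb E[(1+x-T)\mathds 1_{\{T\geq x\}}]=0$ for every $x\geq s^*$.

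\textbf{Step 2: renting is a best response.} I impose that $\Delta_s(x):=\mathbb E[(1+x-T)\mathds 1_{\{T\geq x\}}]\geq 0$ for all $x\in[s,s^*)$. Here the relevant event is $T>s$, since a signal at $s$ only occurs then. This constraint says $1+x\geq \mathbb E[T\mid T\geq x]$ for $x<s^*$, which should hold with some slack for the designed $\rho$. Under it, for fixed fallback $y=s_F$, an optimal $f$ rents forever on $[0,y)$ and buys at $y$. This gives the reduced cost
$$\Phi(y)=\mathbb E\big[T^2\mathds 1_{\{T<y\}}\big]+\mathbb E\big[\big((T-y)(1+y)+yT\big)\mathds 1_{\{T\geq y\}}\big]\leq \mathbb E[C_f(T)].$$

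\textbf{Step 3: equalize and solve.} I require $\Phi(y)=C\,\mathbb E[\min(T,T^2)]$ for all $y\in[0,s^*]$, mirroring the fact that the upper-bound fallback density is supported on $[0,s^*]$. For $y>s^*$, $\Phi$ should only increase, which I would check directly since the tail is memoryless. Differentiating $\Phi$ twice in $y$ turns the equalization into a linear ODE for $\rho$. I expect exponential solutions of the type $e^{-2y}$, dual to the $e^{2x}$ density of $s_F$ in Theorem~\ref{th:rand-upper}. The boundary conditions are: $\Phi(0)=\mathbb E[T]$ (never buy), continuity at $s^*$ with the tail, and total mass $1$. These fix the constants. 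Computing $\mathbb E[\min(T,T^2)]$ with $3e^{s^*-1}=2$, as in the deterministic proof, should then yield $C=4e^2/(4e^2-9)$.

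The main obstacle is Step 3 together with the compatibility check of Step 2. The solved density must be nonnegative, and the no-early-buying inequality $1+x\geq\mathbb E[T\mid T\geq x]$ must hold for every $x<s^*$ under it. If the check fails, I would first try to relax it by allowing an atom at $s^*$ whose weight is tuned, as $p=u$ was in the deterministic construction.
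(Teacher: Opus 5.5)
Your plan is essentially the paper's proof: the same Yao-type averaging, the rate-$1$ tail beyond $s^*=\ln(2e/3)$ that makes every purchase after $s^*$ cost-neutral, the reduction $\mathbb E[C_f(T)]=\mathbb E[C_g(T)]+\int_0^{s_F}\Delta(f(s))\,ds$ to the rent-then-fallback response $g$, and equalizing $\Phi$ on $[0,s^*]$, which does force the density $2e^{-2t}$ on $(0,s^*)$ plus an atom at $s^*$. One bookkeeping fix: $\Phi(0)=\mathbb E[T]$ and continuity of $\Phi$ at $s^*$ hold automatically, so the condition that actually pins the atom is $\Phi'(0)=2\mathbb E[T]-1=0$, giving atom weight equal to the tail mass $\tfrac12e^{-2s^*}$; with that choice $\Delta(x)=\tfrac12e^{-2x}\geq 0$ on $[0,s^*]$, $\Phi$ increases beyond $s^*$, and $\mathbb E[T\,\OPT(T)]=\tfrac12(1-e^{-2s^*})$ yields exactly $4e^2/(4e^2-9)$.
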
 
 
\begin{proof} 
The proof follows the same structure as in Theorem~\ref{th:det-lower}. For any deterministic response function $f$, let 
$$ 
C_f(t)=t\,\ecostf=\int_0^t\costf\,ds. 
$$ 
Multiplying by $t$ removes the density $1/t$ of the uniform signal. We 
 construct a stochastic adversary depicted by a random variable $T$ representing the time horizon $t$. The objective is to ensure that $\mathbb E[C_f(T)]\geq1/2$ 
for every $f$
while $\mathbb E[T\OPT(T)]$ is small enough to obtain the desired ratio. 
 
Set 
$$ 
s^*=\ln(2e/3) 
\qquad\text{and}\qquad 
\eta=\frac12e^{-2s^*}. 
$$

We define $T$ as follows. Start with an exponential random variable
$E_2\sim\mathrm{Exp}(2)$. If $E_2<s^*$, set $T=E_2$. Otherwise, the
horizon has reached the cutoff $s^*$: with probability $1/2$, stop
exactly at $s^*$, and with probability $1/2$, continue for an
independent exponential time $E_1\sim\mathrm{Exp}(1)$, that is, set
$T=s^*+E_1$. 
 
Equivalently, $T$ has density 
\begin{equation} 
\label{eq:rand-law} 
d_T(t)= 
\begin{cases} 
2e^{-2t},&0<t<s^*,\\[1mm] 
\eta e^{s^*-t},&t>s^*, 
\end{cases} 
\end{equation} 
together with a point probability 
$$ 
\mathbb P(T=s^*)=\eta. 
$$ 
 
The first part has total probability $1-e^{-2s^*}$, while the remaining
probability $e^{-2s^*}$ is split equally between the point mass at
$s^*$ and the exponential tail after $s^*$. Thus $T$ is well defined. 
 
We now explain the main idea of the proof. Fix a deterministic response function
$f$ with fallback $s_F$. Let $g$ be the function equal to $\infty$ on $[0,s_F]$
and to $s_F$ afterwards. As in Theorem~\ref{th:det-lower}, denoting
$
\Delta(x)
=
\mathbb E[(1+x-T)\mathds 1_{\{T\geq x\}}],
$ (with $\Delta(\infty)=0$)
we have for all $0<s<s_f$, $\Delta(f(s))=E_T[\costf-\costfst g s t]$ and
\begin{equation} 
\label{eq:rand-certificate} 
\mathbb E[C_f(T)] 
= 
E[C_g(T)]+\int_0^{s_F}\Delta(f(s))\,ds.
\end{equation} 

Therefore, to prove $\mathbb E[C_f(T)]\geq1/2$ for every deterministic
response, it is enough to show that our choice of $T$ satisfies
$$
E[C_g(T)]\geq\frac12
\qquad\text{for any }s_F,
$$
and
$$
\Delta(x)\geq0
\qquad\text{for every purchase time }x.
$$
We now verify these two properties.

Direct integration gives 
\begin{equation} 
\label{eq:rand-tails} 
\mathbb P(T\geq x)= 
\begin{cases} 
e^{-2x},&0\leq x\leq s^*,\\ 
\eta e^{s^*-x},&x>s^*, 
\end{cases} 
\end{equation} 
and 
\begin{equation} 
\label{eq:rand-tail-moment} 
\mathbb E[T\mathds 1_{\{T\geq x\}}]= 
\begin{cases} 
(x+\frac12)e^{-2x},&0\leq x\leq s^*,\\ 
(x+1)\eta e^{s^*-x},&x>s^*. 
\end{cases} 
\end{equation} 
The formulas at $x=s^*$ include the point probability at $s^*$. In
particular,
$$ 
\mathbb E[T]=\frac12. 
$$ 
 
\emph{The fallback term.}
For a fixed horizon $t$, we have:

$$
C_{g}(t) = 
\underbrace{ 
t 
\cdot 
\frac{\min(s_F,t)}{t} 
\cdot 
t 
}_{\substack{\text{signal arrives before the fallback:}\\ 
\text{probability } \min(s_F,t)/t,\; 
\text{cost } t,\; 
\text{rescaling factor } t}} \quad+
\underbrace{ 
t 
\cdot 
\frac{(t-s_F)\mathds 1_{\{t\geq s_F\}}}{t} 
\cdot 
(1+s_F) 
}_{\substack{\text{no signal before the fallback:}\\ 
\text{probability } (t-s_F)/t,\; 
\text{cost } 1+s_F,\; 
\text{rescaling factor } t}}. 
$$
 
Hence 
$$
C_{g}(t)
=
t\min(s_F,t)
+
(1+s_F)(t-s_F)\mathds 1_{\{t\geq s_F\}}.
$$
 
Averaging over the random horizon $T$ gives 
\begin{align*} 
R(s_F) := E[C_g(T)]
&= 
\mathbb E[T\min(s_F,T)] 
+(1+s_F)\mathbb E[(T-s_F)\mathds 1_{\{T\geq s_F\}}]\\ 
&= 
\int_0^{s_F} 
\mathbb E[T\mathds 1_{\{T\geq s\}}]\,ds 
+(1+s_F) 
\left( 
\mathbb E[T\mathds 1_{\{T\geq s_F\}}] 
-s_F\mathbb P(T\geq s_F) 
\right). 
\end{align*} 
 
By definition of $T$, this function is continuous except at $s^*$, and the following shows that it is continuous at $s^*$ as well:
\begin{align*}
\lim_{\varepsilon\to0^+}R(s^*+\varepsilon)-R(s^*-\varepsilon) &= (1+s^*)(E[T\mathds 1_{\{T=s^*\}}]-s^*\mathbb P(T=s^*))=0
\end{align*}

We also have $R(0)=\mathbb E[T]=\frac12$.

Away from $s^*$, the distribution of $T$ has a density $d_T(s_F)$, so 
$$
\frac{d}{ds_F}\mathbb P(T\geq s_F)=-d_T(s_F) 
\qquad\text{and}\qquad 
\frac{d}{ds_F}
\mathbb E[T\mathds 1_{\{T\geq s_F\}}]
=-s_F d_T(s_F). 
$$ 
Therefore, 
\begin{align*} 
R'(s_F) &= 
\mathbb E[T\mathds 1_{\{T\geq s_F\}}] 
+\left( 
\mathbb E[T\mathds 1_{\{T\geq s_F\}}] 
-s_F\mathbb P(T\geq s_F) 
\right)\\
&~~~~~~~~~~~~~~~~~~~+
(1+s_F) 
\left( 
-s_F d_T(s_F) 
-\mathbb P(T\geq s_F) 
+s_F d_T(s_F) 
\right)\\ 
&= 
2\mathbb E[T\mathds 1_{\{T\geq s_F\}}] 
-(1+2s_F)\mathbb P(T\geq s_F)\\ 
&= 
\begin{cases} 
2\left(s_F+\frac12\right)e^{-2s_F} 
-(1+2s_F)e^{-2s_F}, 
& s_F<s^*,\\[2mm] 
2(s_F+1)\eta e^{s^*-s_F} 
-(1+2s_F)\eta e^{s^*-s_F}, 
& s_F>s^*, 
\end{cases}\\ 
&= 
\begin{cases} 
0,&s_F<s^*,\\ 
\eta e^{s^*-s_F},&s_F>s^*. 
\end{cases} 
\end{align*} 
 
Thus $R$ is constant on $[0,s^*]$ and increasing afterwards. Hence
its minimum is attained for every $s_F\in[0,s^*]$, and in particular
$$
R(s_F)\geq R(0)=\frac12
$$
for every fallback $s_F$, including $s_F=\infty$ by taking limits.

\emph{Purchases after the signal.}
A quick computation gives
\begin{align*} 
\Delta(x) 
&=\mathbb E[(1+x-T)\mathds 1_{\{T\geq x\}}]\nonumber\\
&=(1+x)\mathbb P(T\geq x) 
-\mathbb E[T\mathds 1_{\{T\geq x\}}]\nonumber\\ 
&= 
\begin{cases} 
\frac12e^{-2x},&0\leq x\leq s^*,\\ 
0,&x>s^*. 
\end{cases} 
\end{align*} 
Therefore
$
\Delta(x)\geq0
$
for every purchase time.

Using the bounds on $\Delta$ and $R$ in~\eqref{eq:rand-certificate}, we obtain that every
deterministic response satisfies
$$
\mathbb E[C_f(T)]
=
R(s_F)+\int_0^{s_F}\Delta(f(s))\,ds
\geq
\frac12.
$$

It remains to compute the corresponding average of the optimal cost. 
Since $s^*<1$, we obtain 
\begin{align*} 
\mathbb E[T^2] 
&=2\int_0^\infty x\,\mathbb P(T>x)\,dx 
=\frac12+\eta,\\ 
\mathbb E[T\OPT(T)] 
&=\mathbb E[T^2] 
-\eta\int_1^\infty t(t-1)e^{s^*-t}\,dt\\ 
&=\frac12+\eta-3\eta e^{s^*-1}\\ 
&=\frac{1-e^{-2s^*}}2, 
\end{align*} 
where the last equality uses $3e^{s^*-1}=2$. 
 
Now consider any randomized algorithm with competitive ratio $C$. We describe it
as using a random variable $\Omega$ dictating which deterministic response
function $f_\Omega$ will be used.

For every fixed $t$, we have
$$ 
\mathbb E_\Omega[C_{f_\Omega}(t)] 
\leq C\cdot t\OPT(t). 
$$ 
Taking the expectation over the variable $T$, independently of the valuation of $\Omega$, and 
applying~\eqref{eq:rand-certificate} to each deterministic response 
$f_\Omega$, we get 
$$ 
C 
\geq 
\frac{\mathbb E_T\mathbb E_\Omega[C_{f_\Omega}(T)]}{\mathbb E(T\OPT(T))}
= 
\frac{\mathbb E_\Omega\mathbb E_T[C_{f_\Omega}(T)]}{\mathbb E(T\OPT(T))}
\geq\frac{1/2}{(1-e^{-2s^*})/2}
\geq \frac{4e^2}{4e^2-9} 
= 
C_{\mathrm{rand}}. 
$$ 
 
\end{proof}

\begin{remark}
We now provide some intuition on how the adversary strategy $T$ was determined.

The objective is to minimize simultaneously both terms of Equation
\eqref{eq:rand-certificate} for all possible fallback times and response functions:
\[
\mathbb E[C_f(T)]
=
 E[C_g(T)]+\int_0^{s_F}\Delta(f(s))\,ds.
\]
The first exponential part, of rate $2$, allows $E[C_g(T)]$ to be constant over
fallbacks $s_F\leq s^*$. Hence, under this adversary, all fallbacks in this
interval lead to the same expected cost for $g$ (the strategy renting before
$s_F$). Larger fallbacks will lead to larger costs.

The second exponential part, of rate $1$, allows us to ensure $\Delta(x)=0$ for
every purchase time $x>s^*$ (and stay positive for $x<s^*$), which implies $
E[C_f(T)]\geq  E[C_g(T)]$. In particular, assuming the signal belongs to
$[0,s_F]$, any purchase time larger than $s^*$ leads to the same cost, as at
any given time after $s^*$, the expected remaining renting cost equals the
buying cost.

The point probability at $s^*$ allows us to 
connect both regimes while preserving their properties.

Consequently, a large class of response functions attains the same value
$\mathbb E[C_f(T)]=1/2$: any fallback $s_F\leq s^*$, together with post-signal
purchases after $s^*$ (or renting forever), achieves equality. The
distribution of $T$ is therefore designed such that both the response function
shape and its fallback are irrelevant.

More generally, the same construction can be made with an arbitrary
cutoff $u\in(0,1)$ in place of $s^*$. The scaled expected cost remains equal to $1/2$ for this whole
class of response functions, and only the averaged optimal cost
$\mathbb E[T\OPT(T)]$ depends on $u$. Minimizing this denominator gives
$u=\ln(2e/3)=s^*$, which yields the strongest lower bound.
\end{remark}

\section{Stability to distribution deviation}
\label{sec:stab}

In this section, we assume that the true distribution of the signal is not uniform but follows a density function $f_D$, defined on $[0,1]$ then scaled to $[0,t]$. Let $d(f_D)=\sup_{x\in[0,1]}|f_D(x)-1|$ be the maximum density deviation between the assumed uniform density and the actual distribution.

\begin{theorem}
\label{th:deviation}
The optimal deterministic and randomized solutions to the \USSR problem achieve a competitive ratio larger by at most an additive $O(d(f_D))$ term when the signal distribution is drawn following the density function $f_D$ instead of the uniform function.
\end{theorem}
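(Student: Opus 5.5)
We compare, for a fixed response function $f$ and horizon $t$, the expected cost under the true signal density $f_D$ (scaled to $[0,t]$, i.e.\ density $f_D(s/t)/t$) with the expected cost under the uniform density. Write $\ecost_D(f,t)=\int_0^t \costf\, f_D(s/t)\,\frac{ds}{t}$. Then
\begin{align*}
\left|\ecost_D(f,t)-\ecostf\right| \leq \frac{1}{t}\int_0^t \costf\,|f_D(s/t)-1|\,ds \leq d(f_D)\,\max_{s\in[0,t]}\costf .
\end{align*}
So it suffices to bound $\costf/\OPT(t)$ pointwise in $s$ by a constant, uniformly in $t$, for the optimal deterministic function of Section~\ref{sec:conti} and for every realization of the randomized algorithm of Theorem~\ref{th:rand-upper}. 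Dividing by $\OPT(t)$ and taking the supremum over $t$ then gives $CR_D(f)\leq CR(f)+K\,d(f_D)$. For the randomized algorithm, averaging over $\Omega$ preserves this bound, since the deviation bound holds for each realization.

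The pointwise bound splits into cases on $t$. If $t\leq s_F$, or more generally if no purchase happens before $t$, the cost is $t=\OPT(t)$ when $t\leq1$. If a purchase happens at time $x\leq t$ with $t\leq 1$, the cost is at most $1+x\leq 1+t$. For the optimal deterministic function, any purchase before time $t$ occurs at time at least $s_F$: after $s_F$ it is the fallback, and on $[s_R,s_F)$ it is $f(s)\geq s^*>s_F$. Hence $t\geq s_F$, and the ratio is at most $(1+t)/t\leq 1+1/s_F$, which is bounded. For $t>1$ the cost is at most $2$ whenever a purchase occurs, since all finite buying times lie in $[0,1]$.

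The obstacle is the renting branch $f=\infty$ on $[0,s_R)$ for $t>1$. There the cost is $t$, which is unbounded, so the pointwise $\max_s\costf$ bound fails. I would handle this term separately by integrating exactly rather than taking the maximum. The contribution is
\[
\frac1t\int_0^{s_R} t\,|f_D(s/t)-1|\,ds\leq s_R\, d(f_D),
\]
because the integrand is multiplied by the cost $t$ while the signal density carries $1/t$, and $t$ cancels. The same cancellation works for the randomized algorithm, whose realized $s_R$ is at most $0.234\,s^*$.

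For the randomized algorithm, two further points need checking. First, $s_F$ is drawn with support down to $0$, so the bound $1+1/s_F$ is not uniform. For small $t<1$, however, a purchase before $t$ requires $s_F\leq t$, giving cost at most $1+s_F\leq 1+t$ out of $\OPT=t$. I would instead bound the deviation contribution of the event $\{s>s_F\}$ directly: it is at most $d(f_D)(1+s_F)(t-s_F)/t^2\leq d(f_D)(1+t)/t$ only when $s_F<t$. Since $E_\Omega$ of $\mathds 1_{\{s_F<t\}}$ is $O(t)$ (the density of $s_F$ is bounded), the averaged contribution is $O(d(f_D))$. Second, $b_1\geq s^*$ is bounded away from $0$, so the middle branch is harmless. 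Collecting the three branches yields the additive $O(d(f_D))$.
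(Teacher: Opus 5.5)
Your proof is correct and is essentially the paper's argument: a case split on $t$ with branch-wise bounds of the form $d(f_D)\times$(cost), the cancellation $d(f_D)\cdot\frac{s_R}{t}\cdot t$ on the renting branch for large $t$, and an average over the random fallback for small $t$ (you use that the density of $s_F$ is bounded, where the paper integrates $d(f_D)\,s_F/t$ against it). Incidentally, the inequality the paper uses for $t\geq s^*$ in the deterministic case holds for every $t$ and every realization of the random bits: because costs are nonnegative, the cost deviation is at most $d(f_D)$ times the uniform expected cost $\ecostf$. After averaging, this bounds the ratio increase by $d(f_D)$ times the uniform competitive ratio in one line, which removes both your renting-branch obstacle and the small-$s_F$ issue.
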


\begin{proof}

Let $F_D$ be the cumulative distribution function of the probability density $f_D$.
The signal density on $[0,t]$ is $f_D(s/t)/t$, whose deviation from the uniform density $1/t$ is at most $d(f_D)/t$. We write $\Delta$ for the absolute difference between the expected costs under the two signal distributions.
We first focus on the deterministic setting.

We assume the algorithm always follows the strategy depicted in Lemma~\ref{lem:UB}
and perform a case analysis based on the value of the adversarial time horizon $t$:
\begin{itemize}
\item $t<s_F$: the algorithm rents under both settings.
\item $t\in[s_F,s^*]$: with a probability $1-F_D(\frac{s_F}{t})$, the algorithm buys at $s_F$, and is otherwise optimal. The over-cost compared to the uniform density signal is then at most $1$ with a probability at most $|F_D(\frac{s_F}t)-\frac{s_F}t|\leq d(f_D)\cdot \frac{s_F}t$. As $t\geq s_F$ and $\OPT\geq t$, we obtain that the competitive ratio is then at most $d(f_D)/s_F\leq2.7d(f_D)$ higher.
\item $t\geq s^*$:  Recall that $h(t)\in[s_R,s_F]$ for $t\in[s^*,1].$ Extending $h(t)$ to be equal to $h(1)= s_R$ for $t>1$, we have:
$$  \ecostf  = \mathbb{P}(s < h(t)) \cdot t + \int_{h(t)}^{s_F} (f(s)+1) \frac{f_D(s/t)}{t}ds + \mathbb{P}(s \ge s_F) \cdot (1+s_F).$$
Using the density deviation bound and Lemma~\ref{lem:UB}, the expected cost difference when the signal is drawn from $f_D$ instead of uniformly is bounded by:
\begin{align*}
\Delta &\leq d(f_D)\cdot \frac{h(t)}{t}\cdot t + \frac{d(f_D)}{t}\cdot \int_{h(t)}^{s_F} (f(s)+1)ds + d(f_D)\cdot \frac{t-s_F}{t}\cdot (1+s_F)\\
&\leq C^*d(f_D)\OPT(t).
\end{align*}
The three terms in the first right-hand side are $d(f_D)$ times the expected cost under a uniform signal: its density is $1/t$ and its interval probabilities are $h(t)/t$ and $(t-s_F)/t$. Lemma~\ref{lem:UB} bounds this expected cost by $C^*\OPT(t)$, which gives the second inequality.
Therefore, the competitive ratio difference is at most $C^*d(f_D)\leq 1.5d(f_D)$.
\end{itemize}

\medskip

We now consider the randomized setting, so assume the algorithm follows the strategy described in Section~\ref{sec:rando}, and perform a similar case analysis:
\begin{itemize}
\item $t<s^*$: If the random parameter $s_F$ is larger than $t$, then the algorithm always rents. Otherwise, the overcost is at most $d(f_D)\cdot \frac{s_F}{t}$. Averaging over $s_F$, the expected cost difference for $t$ fixed is then at most:
\begin{align*}
\Delta &\leq \int_{0}^{t} 2(C_{rand}-1)e^{2x}\cdot d(f_D)\cdot \frac{x}{t} dx \\
&= \frac1{2t}(C_{rand}-1)d(f_D)[(2x-1)e^{2x}]_0^t\\
&=\frac1{2t}(C_{rand}-1)d(f_D)[(2t-1)e^{2t}+1].
\end{align*}
The optimal cost is $t$ and the displayed upper bound on $\Delta/t$ is increasing in $t$ over $(0,s^*]$ and bounded by $1.1d(f_D)$, so the competitive ratio difference is at most $1.1d(f_D)$.

\item $t\geq s^*$: We fix the values of $s_F$ (so $s_R$) and $b_1$, we then have:
$$\ecostf= \mathbb P (s<s_R)\cdot t+ \mathbb P (s\in[s_R,s_F)) \cdot(t\cdot\mathds 1_{t<b_1}+(1+b_1)\mathds{1}_{t\geq b_1}) +\mathbb P (s\geq s_F)\cdot (1+s_F).$$
The costs on $[s_R,t]$ are at most $2$, and the expected cost difference is bounded by:
$$\Delta \leq d(f_D)\cdot \frac{s_R}{t}\cdot t + d(f_D)\cdot \frac{t-s_R}t \cdot 2\leq d(f_D)\cdot (s^*+2)\leq 2.6 d(f_D).$$
This bound holds for every choice of the random parameters, and hence also after averaging over them.
So the competitive ratio difference is at most $2.6d(f_D)/\min\{1,t\}\leq 4.4d(f_D).$

\end{itemize}

\end{proof}

\section{Conclusion}
\label{sec:conclu}

In this paper, we have studied the classic ski rental problem under a novel
model allowing the algorithm to detect stochastic information correlated to an
adversarial input, which prevents Bayesian approaches as the input remains
chosen adversarially. Our tight bounds show that such information still
allows us to obtain significant improvements over the classic setting, even when
the algorithm is restricted to simplistic structures.

The main open direction is then to handle arbitrary signal distributions. We
have shown that our solutions are not sensitive to small perturbations in the
corresponding density functions (using the infinity norm), but the methods we
propose are tied to an underlying uniform distribution. The competitive ratio
gains will then heavily depend on the actual signal distribution, as
distributions highly concentrated on a small value are extremely valuable, while
distributions skewed towards the horizon time $t$ cannot help the algorithm.
Another direction is also to explore further problems where such a model is
relevant, apart from the already studied non-clairvoyant scheduling and
secretary problems.

\paragraph*{AI usage statement}
The authors used LLMs sparsely as a help to conduct the research reported in this manuscript, mostly to look for solutions of equations leading to the function depicted in Theorem~\ref{th:rand-upper}, and to perform faster various algebraic computations which were then checked independently.

\ifarxiv\bibliographystyle{plainnat}\fi
\bibliography{bibfile}

\newpage

\appendix

\section{Omitted proofs for Section~\ref{sec:step}}

We first state and prove several technical lemmas, before proving the optimal two and three-step response functions.

\subsection{Technical lemmas}

\begin{lemma}
\label{lem:stepincr}
For any 3-step response function starting by a renting step, there exists such a non-increasing response function with a non-larger competitive ratio.
\end{lemma}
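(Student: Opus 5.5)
A 3-step response function starting by a renting step has the form $f=\infty$ on $[0,s_R)$, $f=b$ on $[s_R,s_F]$, and $f=s_F$ on $(s_F,\infty)$, for some $b\geq s_F$ (legality forces $b\geq s$ on $[s_R,s_F]$, hence $b\geq s_F$ up to the endpoint). Non-increasing means exactly $b\geq s_F$ together with the renting value $\infty$ being largest, which holds automatically. So the only way such a function can fail to be non-increasing is an improper specification where the middle step lies below $s_F$ on part of its interval. The plan is to show that any such violation can be removed without increasing $\eratiof$ for any $t$. Legality then pins down the claim once we handle the case where the step values are given in some other order, say a buying step $b<s_F$ preceding the fallback, or a renting step placed after a finite buying step.

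The concrete plan is as follows. Consider a general 3-step legal function with a first renting step on $[0,s_R)$ and a middle buying time $b$ on $[s_R,s_F]$. If $b<s_F$, legality requires $b\geq s$ for all $s\in[s_R,s_F]$, which is impossible unless the interval is degenerate. So either $b\geq s_F$, giving a non-increasing function, or the function is effectively a 2-step function. In the degenerate case I would redefine the function by merging: set the fallback to $s_R$ if $b$ is essentially the buying time at the signal. Then I would compare costs pointwise in $(s,t)$.

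The key tool is a pointwise comparison. For every fixed $t$ and every signal $s$, I would check that the modified function $\tilde f$ satisfies $\costfst{\tilde f}{s}{t}\leq\costf$, or at least that the integrated cost satisfies $C_{\tilde f}(t)\leq C_f(t)$. The relevant monotonic fact is that, for a fixed $t$, the cost $\cost$ as a function of the buying time $x\geq s$ equals $1+x$ if $x\leq t$ and $t$ otherwise. Replacing a buying time $x$ by $\min(x,y)$ with $y$ the later-available fallback thus needs a case check on $t$ relative to $x$ and $y$. I would organize the proof by the position of $t$: $t<s_F$, $t\in[s_F,b)$, and $t\geq b$.

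The main obstacle is that pointwise domination is not true in general: buying earlier helps when $t$ is large and hurts when $t$ is small. If the rearrangement swaps values between signal intervals, I would instead use an exchange argument. Since the uniform density is constant on $[0,t]$, the expected cost depends only on how much signal mass each buying value receives. Sorting the values in non-increasing order keeps the multiset of values (via a measure-preserving rearrangement of $[0,s_F]$) while preserving legality, because larger values are moved to smaller $s$. For each $t$, the mass of signals that arrive before $t$ and map to each value is what matters. I expect to need to show that moving large values, including $\infty$, earlier only removes cases where a signal arrives after $t$. Those signals contribute cost $t$ regardless, so the cost is unchanged for every $t$. Verifying this carefully is the crux.
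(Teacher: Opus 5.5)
Your argument fails at the first step, where you let the middle step occupy all of $[s_R,s_F]$ and conclude that legality forces $b\geq s_F$. That parametrization makes the lemma vacuous, but it is not the class the lemma is about. When the lemma is invoked in the proof of Lemma~\ref{lem:3step}, the functions are $\infty$ on $[0,s_R)$, $b_1$ on $[s_R,\tau)$ and $s_F$ on $[\tau,\infty)$, with $s_R\leq\tau\leq s_F$. Legality then only requires $b_1\geq\tau$. So $\tau\leq b_1<s_F$ gives a perfectly legal function that is not non-increasing, and this is exactly the case the paper's proof handles (it starts by assuming $b_1<s_F$).

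Your rearrangement fallback cannot deal with this case, for three reasons:
\begin{itemize}
\item The value on $(s_F,\infty)$ is pinned to $s_F>b_1$, so no measure-preserving rearrangement of $[0,s_F]$ produces a non-increasing function.
\item Moving $b_1$ to later signal times violates $f(s)\geq s$.
\item The claimed invariance is false. Since $s\in[0,t]$, $C_f(t)=\int_0^t\costfst{f}{s}{t}\,ds$ only sees the values of $f$ on $[0,t]$, and a rearrangement of $[0,s_F]$ changes these values when $t<s_F$. Signals never arrive after $t$; the cost $t$ comes from purchases after $t$.
\end{itemize}
The ``other orders'' you mention do not arise, since the first step is the renting one.

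What is missing is a construction that actually changes the values, together with a comparison of suprema rather than of $\eratioft{f}{t}$ for each $t$. You rightly note that pointwise domination fails. The paper first uses the non-increasing function of Lemma~\ref{lem:cr1.5} to reduce to $CR(f)<1.5$. This gives $s_F<1/2$, and evaluating at $t=s_F$ gives $\Delta=s_F-s_R\leq s_F/2$. It then replaces $f$ by the function $f'$ that rents on $[0,s_F-\delta)$ and buys at $s'_F=s_F-\delta$ afterwards, with $\delta=\Delta/2$. Three ranges of $t$ are checked:
\begin{itemize}
\item For $t\geq s_F$: $\eratioft{f'}{t}\leq\eratioft{f}{t}+\delta(2s_F-1-\delta)/(t\min(1,t))\leq\eratioft{f}{t}$, using $s_F<1/2$.
\item For $t\in[s_F-\delta,s_F)$: $\eratioft{f'}{t}$ may exceed $\eratioft{f}{t}$. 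However, it is increasing there, because its maximizer is at least $\tfrac{33}{28}s_F$ (using $\delta\leq s_F/4$). Hence it is bounded by $\eratioft{f'}{s_F}\leq\eratioft{f}{s_F}\leq CR(f)$.
\item For $t<s_F-\delta$: the ratio is $1$.
\end{itemize}
These quantitative constraints inherited from $CR(f)<1.5$ are what make the replacement work, and nothing in your plan supplies them.
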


\begin{proof}
By Lemma~\ref{lem:cr1.5}, we know that there exits a non-increasing 3-step response function starting by a renting step with competitive ratio smaller than $1.5$. Consider a 3-step response function $f$, starting by a renting step, with a competitive ratio smaller than $1.5$. $f$ equals infinity until $s_R$, then equals some value $b_1$ until $s_F$ and finally equals $s_F$. Assume by contradiction $b_1<s_F$.

First, we have $s_F<0.5$, as otherwise the expected cost
for large $t$ tends towards $1+s_F\geq1.5$.
The objective is to show that there exists a legal 3-step function $f'$, starting by a renting step, such that
$CR(f')\leq CR(f)$ and, for all $x<s'_F$, $f'(x)\geq s'_F$,
with $f'(x)=s'_F$ for $x\geq s'_F$.

Let $\Delta=s_F-s_R$ be the interval length on which
$f(x)<s_F$.
If $\Delta>s_F/2$, then the expected cost at $t=s_F$ is at least
$\frac{s_F+1}{2}\geq\frac32s_F$, as with probability at least half,
the algorithm buys before $t$ so pays at least 1, and pays at least
$s_F$ otherwise. This leads to a competitive ratio at least $1.5$,
so a contradiction. Therefore, $\Delta\leq s_F/2$.

Let $\delta=\Delta/2$. Consider $f_1$ equal to 
$\infty$ on $[0,s_F-\delta]$ and to $s_F$ on $[s_R+\delta,\infty)$.
We then build $f'$ from $f_1$ by decreasing its value for
$x\geq s_F-\delta$ from $s_F$ to
$s'_F:=s_F-\delta\geq3s_F/4$.

If $t<s_F-\delta$, we have $\eratioft{f'}{t}=1$.

Consider $t\geq s_F$, and recall $s_F<0.5$.
We express $\eratioft{f'}{t}$ in terms of $\eratioft{f}{t}$:
\begin{align*}
\eratioft{f'}{t}
&= \eratioft{f_1}{t}
   -\frac{(t-s_F+\delta)\delta}{t\min(1,t)}\\
&\leq \eratioft{f}{t}
   +\frac{-(t-s_F+\delta)\delta+\delta t+\delta s_F-\delta}
          {t\min(1,t)}\\
&= \eratioft{f}{t}
   +\frac{\delta(2s_F-1-\delta)}{t\min(1,t)}\\
&\leq \eratioft{f}{t}.
\end{align*}

For $s_F-\delta\leq t<s_F$, we aim to show that
$\eratioft{f'}t\leq\eratioft{f'}{s_F}$.
This inequality implies, as we have from the previous case
$\eratioft{f'}{s_F}\leq\eratioft{f}{s_F}$, that
$\eratioft{f'}{t}\leq CR(f)$.

We have:
\begin{align*}
\eratioft{f'}t
&= \frac1t\left(
     \frac{t-s'_F}{t}(1+s'_F)+\frac{s'_F}{t}\cdot t
   \right)\\
&= \frac{1+2s'_F}{t}-\frac{s'_F(1+s'_F)}{t^2}.
\end{align*}

Recalling $s_F<0.5$ and $\delta\leq s_F/4$, this expression
is maximized at:
\begin{align*}
t^*
&= \frac{2s'_F(1+s'_F)}{1+2s'_F}\\
&= s_F\cdot
   \frac{2(1-\frac{\delta}{s_F})
   (1+(1-\frac{\delta}{s_F})s_F)}
   {1+2(1-\frac{\delta}{s_F})s_F}\\
&\geq s_F\cdot
   \frac{2(1-\frac{\delta}{s_F})
   (1+(1-\frac{\delta}{s_F})\cdot0.5)}
   {1+2(1-\frac{\delta}{s_F})\cdot0.5}\\
&= s_F\cdot
   \frac{(1-\frac{\delta}{s_F})(3-\frac{\delta}{s_F})}
   {2-\frac{\delta}{s_F}}\\
&\geq s_F\cdot
   \frac{(1-\frac14)(3-\frac14)}{2-\frac14}\\
&= \frac{33}{28}s_F.
\end{align*}

Therefore, the displayed expression is increasing on
$[s_F-\delta,t^*]\supset[s_F-\delta,s_F]$.
Purchases triggered exactly at $t=s_F$ can only increase the cost
relative to its left-hand limit, so
\[
\eratioft{f'}t
\leq \lim_{u\uparrow s_F}\eratioft{f'}u
\leq \eratioft{f'}{s_F}
\leq CR(f).
\]
Together with the previous cases, this proves $CR(f')\leq CR(f)$.

\end{proof}

\begin{lemma}
\label{lem:buy01dstep}
For every $N\geq1$, the infimum competitive ratio among non-increasing
legal step response functions with at most $N$ finite steps and at most
one initial step at $+\infty$ is approached by response functions with
values in $[0,1]\cup\{+\infty\}$.
\end{lemma}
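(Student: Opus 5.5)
Given a non-increasing legal step response function $f$ with at most $N$ finite steps and at most one initial step at $+\infty$, the plan is to build $\tilde f$ of the same type with values in $[0,1]\cup\{+\infty\}$ and $CR(\tilde f)\leq CR(f)$. The natural operation is truncation, $\tilde f(s)=\min(f(s),1)$ on every finite step, keeping the initial $+\infty$ step (and $s_F$ if $s_F\leq 1$). By Remark~\ref{rmk:sfcr} and Lemma~\ref{lem:cr1.5} we may restrict to $f$ with $CR(f)<1.49$, hence $s_F<1/2<1$. The fallback step is then already in $[0,1]$, and legality is preserved: for $s\leq s_F$ we have $\min(f(s),1)\geq s$ because $f(s)\geq s$ and $s<1$. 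Monotonicity is preserved because $\min(\cdot,1)$ is non-decreasing. The number of finite steps does not increase, since adjacent steps that both become $1$ merge.

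The core step is a pointwise cost comparison for each fixed signal $s$ and horizon $t$, with $\OPT(t)=\min(1,t)$. Only signals with $1<f(s)<\infty$ change behaviour, and for them the algorithm now buys at $1$ instead of at $b:=f(s)>1$.
\begin{itemize}
\item If $t<1$, neither $f$ nor $\tilde f$ buys before $t$, so both pay $t$.
\item If $1\leq t<b$, $f$ pays $t\geq 1$ and $\tilde f$ pays $2$. This is not pointwise dominated when $t<2$.
\end{itemize}
So plain truncation is not monotone in the cost, and this is the main obstacle. I would handle it by working with expectations rather than pointwise costs, and in two stages.

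First, consider $t\geq 1$ and use the explicit form $\ecostf=\frac1t\left(\int_0^{t}\costf\,ds\right)$. The difference between $f$ and $\tilde f$ comes only from the step set $S=\{s:1<f(s)<\infty\}$. Let $b$ be a step value with $b>1$. For $t\geq b$ the cost drops from $1+b$ to $2$. For $t\in[1,b)$ it rises from $t$ to $2$, by at most $1$ and only when $t<2$.

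Second, if truncation fails, replace each step with value $b>1$ by a renting step ($+\infty$) instead. For $t\geq b$ this loses $t-(1+b)$. Alternatively, the best choice between $1$ and $+\infty$ can be selected per step. A cleaner route is a convexity-type argument: for fixed $s$, $\costf$ as a function of the buying time $x=f(s)\in[1,\infty]$ equals $t\mathds 1_{t<x}+(1+x)\mathds 1_{t\geq x}$. The worst-case ratio over $t\geq 1$ only involves $\OPT=1$, which makes the expected cost linear in the mixture. I would then show that moving all mass of $S$ either to $x=1$ or to $x=\infty$ does not increase $\sup_{t\geq1}\eratiof$, while $t<1$ is unaffected. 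Because the lemma only claims the infimum is approached, a limiting argument suffices: replacing $b$ by $\infty$ and then shrinking the renting step is fine if needed. Since $f$ is non-increasing, all steps with values in $(1,\infty)$ form a single prefix interval $[s_R,a)$ after the $+\infty$ step. Thus this reduces to a single interval with a one-parameter choice, checked by a short case analysis on $t\in[1,b)$ versus $t\geq b$. It uses that for $t\geq1$ the algorithm's ratio is at least $1+s_F$-like terms that dominate.
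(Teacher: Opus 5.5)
\textbf{There is a genuine gap.} You correctly observe that plain truncation $\min(f,1)$ fails. After that, however, you only list candidate repairs without proving any of them. The one you commit to is false, even inside your own restricted regime $CR(f)<1.49$: moving all of $S=\{s:1<f(s)<\infty\}$ either to $1$ or to $+\infty$ without increasing $\sup_{t\geq1}\eratioft{f}{t}$.

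Take $f=+\infty$ on $[0,0.05)$, $f=100$ on $[0.05,0.117)$, $f=1.001$ on $[0.117,0.372)$ and $f=0.372$ afterwards.
\begin{itemize}
\item For $t<1.001$ every signal before $0.372$ leads to renting. So on $[0.372,1)$ the ratio is $\frac{1.744}{t}-\frac{0.510384}{t^2}\leq1+\frac{1}{4\cdot0.510384}\approx1.4898$, and it stays below $1.24$ on $[1,1.001)$.
\item On $[1.001,100)$ the ratio equals $1.489-0.000129/t$.
\item For $t\geq100$ it equals $1.422+6.766871/t\leq1.4897$.
\end{itemize}
Hence $CR(f)\approx1.4898$ and $\sup_{t\geq1}\eratioft{f}{t}<1.4897$. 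Sending both steps to $1$ makes the ratio at $t=1$ equal to $0.05+2\cdot0.322+0.628\cdot1.372\approx1.556$. Sending both to $+\infty$ makes it tend to $1+2\cdot0.372=1.744$.

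Your convexity heuristic cannot produce the claim either, for two reasons. Buying at $b\in(1,\infty)$ is not a $t$-independent mixture of buying at $1$ and renting. And a supremum of linear functionals is convex, which bounds interior values from above by endpoint values; that is the wrong direction. Finally, your reduction to ``a single interval with a one-parameter choice'' checked on $t\in[1,b)$ versus $t\geq b$ presupposes a single value $b$. The prefix may carry several values $b_1>\dots>b_k>1$, each creating a breakpoint of $\eratioft{f}{t}$, and no inequality is actually established there.

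\textbf{The missing idea} is the paper's replacement of one value at a time, with a specifically placed threshold. Let $b_1>1$ be the largest finite value, taken on $[s_1,s_2)$, and set $W=s_2(b_1-s_F)+\int_{s_2}^{s_F}(f(u)-s_F)\,du$. Then $\eratioft{f}{t}=1+s_F+s_1+\frac{W-s_1(1+b_1)}{t}$ for $t\geq b_1$.
\begin{itemize}
\item If $W\geq s_2(1+b_1)$, renting on all of $[0,s_2)$ works. Nothing changes for $t<b_1$, and on $[b_1,\infty)$ the new ratio is non-increasing and smaller than $\eratioft{f}{b_1}$ at $t=b_1$.
\item Otherwise one rents on $[0,s^*)$ with $s^*=W/(1+b_1)$ and buys at $b_1'=\max\{1,b_2\}$ on $[s^*,s_2)$. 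Nothing changes for $t<b_1'$. For $t\geq b_1'$ the ratio is at most $1+s_F+s^*=\frac{b_1\,\eratioft{f}{b_1}+(1+s_F+s_1)}{1+b_1}\leq CR(f)$. This is a convex combination of $\eratioft{f}{b_1}$ and $\lim_{t\to\infty}\eratioft{f}{t}=1+s_F+s_1$.
\end{itemize}
Using $\max\{1,b_2\}$ instead of $1$ keeps the function non-increasing and merges the modified step into the next one when $b_2>1$. Iterating therefore removes every value above $1$ without increasing the number of finite steps.

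In the example above, the threshold at the step boundary $0.117$ happens to work: it yields exactly the function of Lemma~\ref{lem:cr1.5}. Your per-step choice and your remark about shrinking the renting step point in this direction. A proof, however, needs an explicit rule for placing the threshold and the inequality showing it does not increase the ratio, and that is what the convex-combination identity provides.
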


\begin{proof}
We show that every non-increasing legal step response function $f$
can be replaced by a function $f'$ with no more finite steps,
finite values at most 1, and $CR(f')\leq CR(f)$.

If $CR(f)\geq2$, the constant response function $f'=1$ suffices.
Hence assume $CR(f)<2$.
Since $CR(f)\geq1+s_F$, we have $s_F<1$.
Moreover, $s_F>0$, since fallback time zero gives an unbounded
competitive ratio.
If all finite values of $f$ are at most 1, there is nothing to prove.

Otherwise, let $b_1>1$ be the largest finite value of $f$, attained on
$[s_1,s_2)$, with $f=+\infty$ on $[0,s_1)$.
Let $b_2<b_1$ be the next finite value, which exists because the final value is $s_F<1$. We have
$
0\leq s_1<s_2\leq s_F<1$ and
$s_F\leq b_2<b_1$.

We construct a replacement that eliminates the value $b_1$.

We define:
\[
W:=s_2(b_1-s_F)+\int_{s_2}^{s_F}(f(u)-s_F)\,du\geq0.
\]
For $t\geq b_1>1$, all finite purchases have occurred and the offline
optimum is 1. Thus
\begin{align*}
\eratioft{f}{t}
&=s_1+\frac{
(s_2-s_1)(1+b_1)
+\int_{s_2}^{s_F}(1+f(u))\,du
+(t-s_F)(1+s_F)
}{t}\\
&=1+s_F+s_1+\frac{W-s_1(1+b_1)}{t}.
\end{align*}
In particular,
\[
\lim_{t\to\infty}\eratioft{f}{t}
=1+s_F+s_1\leq CR(f).
\]

\paragraph*{Case 1: $W\geq s_2(1+b_1)$.}

Define $f'=+\infty$ on $[0,s_2)$ and $f'=f$ elsewhere.
This removes the step at $b_1$, so in particular the term $(s_2-s_1)(1+b_1)$ from the expression of $\eratiof$.
For $t<b_1$, the algorithms coincide.
For $t\geq b_1$,
\[
\eratioft{f'}{t}
=1+s_F+s_2+\frac{W-s_2(1+b_1)}{t}.
\]
The numerator of the last term is nonnegative, so this expression is
non-increasing. Moreover,
\[
\eratioft{f'}{b_1}
=\eratioft{f}{b_1}-\frac{s_2-s_1}{b_1}
\leq\eratioft{f}{b_1}.
\]
Consequently for $t\geq b_1$,
\[
\eratioft{f'}{t}
\leq\eratioft{f'}{b_1}
\leq CR(f)
\]
and therefore $CR(f')\leq CR(f)$.

\paragraph*{Case 2: $W<s_2(1+b_1)$.}

We denote
$
s^*:=\frac{W}{1+b_1}\in[0,s_2)$ and
$b'_1:=\max\{1,b_2\}<b_1$, 
and define:
\[
f'(s)=
\begin{cases}
+\infty,&0\leq s<s^*,\\
b'_1,&s^*\leq s<s_2,\\
f(s),&s\geq s_2.
\end{cases}
\]
The function $f'$ is non-increasing because $b'_1\geq b_2$.
It is legal because $b'_1\geq1>s_F\geq s_2$, and its fallback
time remains $s_F$.

For $t<b'_1$, the algorithms coincide: both rent whenever the signal
lies in $[0,s_2)$, and their responses agree elsewhere.
For $t\geq b'_1\geq1$, all finite purchases of $f'$ have occurred, so
\begin{align*}
\eratioft{f'}{t}
&=1+s_F+s^*
+\frac{W-s_2(b_1-b'_1)-s^*(1+b'_1)}{t}\\
&=1+s_F+s^*
+\frac{(s^*-s_2)(b_1-b'_1)}{t}\\
&\leq1+s_F+s^*.
\end{align*}
Here the second equality uses $W=s^*(1+b_1)$.
Finally,
\[
1+s_F+s^*
=
\frac{
b_1\cdot\eratioft{f}{b_1}+1+s_F+s_1
}{1+b_1}
\leq \frac{b_1 \cdot CR(f)+CR(f)}{1+b_1}
\leq CR(f),
\]
because both $\eratioft{f}{b_1}$ and $1+s_F+s_1$ are at most $CR(f)$.
Thus $CR(f')\leq CR(f)$.

In Case 1, one finite step is deleted.
In Case 2, one infinite step may be introduced and the step at $b_1$ is replaced by a step at $b'_1$:
if $b'_1=b_2$, it merges with the next step; otherwise $b'_1=1$.
Hence neither case increases the number of finite steps, and each
eliminates one finite value larger than 1 without introducing a new one.
Repeating the construction proves the claimed domination statement.
In particular, restricting finite values to $[0,1]$ does not change
the infimum competitive ratio.

\end{proof}

\begin{lemma}
\label{lem:3stepb1eq1}
Consider a response function $f$ defined by the quantities $0\leq s_R\leq \tau \leq s_F\leq b_1\leq 1$ as follows: 
for $s \in [0,s_R)$, $f(s)=\infty$; for  $s \in [s_R,\tau)$, $f(s)=b_1$; for  $s \geq \tau$, $f(s)=s_F$. Then choosing $b_1:=1$ does not increase the competitive ratio.
\end{lemma}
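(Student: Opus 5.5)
We compare $\eratioft{f}{t}$ with $\eratioft{f'}{t}$, where $f'$ is $f$ with $b_1$ replaced by $1$, separately for each regime of $t$. The only signals on which the two functions differ are those in $[s_R,\tau)$. Such a signal has probability $(\min(t,\tau)-s_R)^+/t$.

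\textbf{Pointwise comparison.} On a signal $s\in[s_R,\tau)$:
\begin{itemize}
\item if $t<b_1$, both functions rent, so both pay $t$;
\item if $b_1\leq t<1$, then $f$ pays $1+b_1$ while $f'$ pays $t<1+b_1$;
\item if $t\geq1$, then $f$ pays $1+b_1$ while $f'$ pays $2\geq 1+b_1$.
\end{itemize}
Hence $\eratioft{f'}{t}\leq\eratioft{f}{t}$ for every $t<1$. For $t\geq1$ the ratio of $f'$ is larger, but only by
\[
\frac{(\tau-s_R)(1-b_1)}{t},
\]
which is decreasing in $t$.

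\textbf{The regime $t\geq1$.} Since $\OPT(t)=1$ and every purchase of $f'$ has occurred by time $1$, we can write
\[
\eratioft{f'}{t}=1+s_F+s_R+\frac{K}{t},
\qquad K=(\tau-s_R)(2-s_F)+s_F\tau-s_F-s_R-s_F^2+\dots,
\]
where the exact constant $K$ is obtained as in the proof of Lemma~\ref{lem:buy01dstep}. The function $t\mapsto 1+s_F+s_R+K/t$ is monotone on $[1,\infty)$.
\begin{itemize}
\item If $K\leq0$, the supremum over $t\geq1$ is the limit $1+s_F+s_R$. This limit is also the limit of $\eratioft{f}{t}$, so it is at most $CR(f)$.
\item If $K>0$, the supremum is attained at $t=1$. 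We must then show $\eratioft{f'}{1}\leq CR(f)$.
\end{itemize}

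\textbf{Main obstacle: the case $K>0$ at $t=1$.} I would avoid comparing with $\eratioft{f}{1}$ directly. Instead I would use $\eratioft{f}{b_1}$, since $f$ has already made its middle purchase at time $b_1$. At $t=b_1$ we have
\[
b_1\,\eratioft{f}{b_1}=\frac{s_R b_1+(\tau-s_R)(1+b_1)+(b_1-\tau)(1+s_F)+\dots}{b_1}.
\]
The target is $\eratioft{f'}{1}$, in which $b_1$ is replaced by $1$ in both the horizon and the purchase cost. Each signal-interval contribution can be checked to be at most its counterpart in $b_1\,\eratioft{f}{b_1}$ divided by $b_1$. This uses that the cost on every interval is affine in $t$ with nonnegative intercept, so the per-$\OPT$ ratio is nonincreasing in $t$ once $t\geq b_1$.

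The only non-monotone piece is the renting interval $[0,s_R)$: its contribution to $\ecostf$ equals $s_R$ regardless of $t$, so divided by $\OPT(t)=t$ it is $s_R/t$, which decreases. If this direct approach fails, I would argue more simply: for $t\in[b_1,1]$ the expected cost of $f$ is affine in $t$, namely $\alpha+\beta t$. It follows that $\eratioft{f}{t}=\alpha/t+\beta$, which is monotone on $[b_1,1]$. The value of $f$ at $t=1$ coincides with that of $f'$ at $t=1$ minus the deficit $(\tau-s_R)(1-b_1)$. If $\alpha\geq0$, the ratio is maximized at $t=b_1$, and the deficit is covered by the drop in $\alpha/t$ between $b_1$ and $1$. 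If $\alpha<0$, then $K\leq0$, and the limit case already applies.

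Pinning down this sign bookkeeping, that $K>0$ forces the deficit to be absorbed, is the step I expect to be delicate.
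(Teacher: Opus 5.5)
Your skeleton matches the paper's: the pointwise comparison for $t<1$, writing the ratio of $f'$ for $t\geq 1$ as $1+s_R+s_F+K/t$, handling $K\leq 0$ through the common limit $1+s_R+s_F$, and reducing $K>0$ to $\eratioft{f'}{1}\leq\eratioft{f}{b_1}$. That last inequality carries essentially all of the lemma's content, and you do not prove it.

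\begin{itemize}
\item Your displayed $K$ is incomplete and does not match. The correct value is $K=(\tau-s_R)-(s_F\tau+s_R)=\tau(1-s_F)-2s_R$.
\item The interval-by-interval comparison fails on the fallback interval. That interval contributes $(1-\tau)(1+s_F)$ to $\eratioft{f'}{1}$, but only $(b_1-\tau)(1+s_F)/b_1^2$ to $\eratioft{f}{b_1}$, and the latter vanishes when $\tau=b_1$. Your ``nonnegative intercept'' claim is false there: the rescaled contribution $(t-\tau)(1+s_F)$ has intercept $-\tau(1+s_F)$.
\item The fallback argument uses the wrong functional form. On $[b_1,1]$ one has $\eratioft{f}{t}=\frac{1+s_R+s_F}{t}+\frac{B}{t^2}$ with $B=b_1(\tau-s_R)-(s_F\tau+s_R)$, not $\alpha/t+\beta$.
\item A negative coefficient there does not force $K\leq 0$, because $K=B+(1-b_1)(\tau-s_R)$. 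For example, $\tau=s_F=b_1=\tfrac12$ and $s_R=0.05$ give $B<0<K$. Your case split on the wrong quantity therefore leaves this case open.
\end{itemize}

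The missing step is a direct computation that needs no further case split.
\begin{itemize}
\item At $t=1$ you have $\eratioft{f'}{1}=1+\tau+s_F-(s_F\tau+s_R)$.
\item At $t=b_1$ all three signal intervals of $f$ have triggered, so $\eratioft{f}{b_1}=\frac{1+\tau+s_F}{b_1}-\frac{s_F\tau+s_R}{b_1^2}$.
\item Moving terms across and dividing by $\frac1{b_1}-1>0$, the target inequality becomes $(s_F\tau+s_R)\bigl(1+\frac1{b_1}\bigr)\leq 1+\tau+s_F$.
\item The hypothesis $K\geq 0$ gives $s_F\tau+s_R\leq\tau-s_R\leq\tau$. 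Together with $\tau\leq b_1$, the left side is at most $\tau+\tau/b_1\leq\tau+1\leq 1+\tau+s_F$.
\end{itemize}
This is exactly where $K\geq 0$ enters the comparison; without it the reduction to $t=b_1$ can fail, for instance when $s_R=\tau=s_F=b_1<1$.
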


\begin{proof}
Suppose $b_1<1$, and consider $f'$ equal to $f$ except that it equals $1$ instead of $b_1$ on $[s_R,\tau)$.

    For $t < b_1$, both function lead to the same behavior: renting.

    For $t \in [b_1,1)$, if $s\in[s_R,\tau)$, then $f$ leads to  a cost of $1+b_1$, while $f'$ leads to a smaller cost of  $ t < 1$.

    For $t \ge 1$:
    \begin{align}
        \eratiof &= \frac{s_R}{t} \cdot t + \frac{\tau-s_R}{t} \cdot (1+b_1) + \frac{t-\tau}{t} \cdot (1+s_F)  \nonumber \\
        &= 1+ s_R + s_F + \frac{b_1(\tau-s_R)-(s_F\tau+s_R)}{t} \label{eq:n3_b}
    \end{align}
    \begin{align}
        \eratioft{f'}{t} &= \frac{s_R}{t} \cdot t + \frac{\tau-s_R}{t} \cdot 2 + \frac{t-\tau}{t} \cdot (1+s_F)  \nonumber \\
        &= 1+ s_R + s_F + \frac{(\tau-s_R)-(s_F\tau+s_R)}{t} \label{eq:n3_1}
    \end{align}
    If $(\tau-s_R)-(s_F\tau+s_R)<0$, then $b_1(\tau-s_R)-(s_F\tau+s_R)<0$, so for both functions, $\eratioft{f}{\cdot}$ is increasing on $[1,\infty)$ and converges to $1+s_R+s_F$.

    Otherwise, $\eratioft{f'}{\cdot}$ is non-increasing over $[1,\infty)$ , so it attains its maximum at $t=1$:
    $$\eratioft{f'}{1} = (1+\tau+s_F) - (s_F\tau+s_R).$$

    We now show that this quantity is lower than $\eratioft{f}{b_1}$ so than $CR(f)$:
    \begin{align*}
        \eratioft{f}{b_1} &= \frac{1}{b_1} \left[ \frac{s_R}{b_1} \cdot b_1+ \frac{\tau-s_R}{b_1} \cdot (1+b_1) + \frac{b_1-\tau}{b_1} \cdot (1+s_F) \right]\nonumber \\
        &= \frac{1+\tau+s_F}{b_1} - \frac{\tau s_F+s_R}{b_1^2}.
    \end{align*}

	Then, we want to show:
    \begin{align*}
        \eratioft{f'}{1} &\le \eratioft{f}{b_1} \\
        \iff (1+\tau+s_F) - (s_F\tau+s_R) &\le \frac{1+\tau+s_F}{b_1} - \frac{\tau s_F+s_R}{b_1^2} \\
        \iff (\tau s_F+s_R) \left(\frac{1}{b_1^2} - 1\right) &\le (1+\tau+s_F) \left(\frac{1}{b_1} - 1\right) \\
        \iff (\tau s_F+s_R) \left(\frac{1}{b_1} + 1\right) &\le 1+\tau+s_F
    \end{align*}
    The last inequality follows from $\tau s_F+s_R \le \tau  - s_R \le \tau $ as $s_F\leq 1$, and $\tau \leq b_1$ so the left hand side is at most $\frac{\tau }{b_1} + \tau  \leq 1 + \tau $.

\end{proof}

\subsection{Optimal few-steps response functions}

We now determine the optimal strategy among the ones composed of exactly $2$ intervals on which $f$ is constant.

\begin{proof}[Proof of \Cref{lem:2step}]

Let $f_a^\tau(s_F)$ be the response function equal to $a$ for $s<\tau$ and to $s_F\in[\tau,1]$ otherwise. We want to show that for all $s_F\in[0,1]$, $a\in[s_F,+\infty]$ and $\tau\in[0,s_F]$, $CR(f_1^{\tilde s_F}(\tilde s_F)) \leq CR(f_a^\tau(s_F))$, for some value $\tilde s_F\approx 0.3353$. Note that we must have $a\geq \tau$ otherwise the response function is ill-defined.

Note that for all $a$, for a large adversarial time $t$, the algorithm associated to $f_a^\tau(s_F)$ pays at least a cost $1+s_F$, as the probability to buy at $s_F$ tends towards 1, $\lim_{t\rightarrow+\infty}\eratioft{f_a^\tau}{t} \geq 1+s_F$. As we will show that $CR(f_1^{\tilde s_F}(\tilde s_F))$  is smaller than $1.6$, we can assume $s_F\leq0.6$.

\begin{itemize}
\item We first show that $CR(f_1^{\tau}(s_F)) \leq CR(f_a^\tau(s_F))$ for all legal values of $s_F$, $a$, $\tau$.
\begin{itemize}
\item 
\textbf{Case 1:} $s_F \leq a < 1$.
 
By Lemma~\ref{lem:3stepb1eq1} with $s_R=0$, we conclude that $CR(f_1^\tau(s_F))\leq CR(f_a^\tau(s_F))$.

\item \textbf{Case 2:} $1 < a < \infty$.

For $t < 1$, both functions lead to the same cost $t$.

For $t \ge 1$, the competitive ratio associated to $f_1$ is:

\begin{align*}
\eratioft{f_1^\tau(s_F)}{t} &= \frac{\tau}{t} \cdot 2 + \frac{t-\tau}{t} \cdot (1+s_F) = 1+s_F + \frac{\tau(1-s_F)}{t}
\end{align*}

And its maximum is attained at $t=1$:

$$\max_{t\geq 1}\eratioft{f_1^\tau(s_F)}t =\eratioft{f_1^\tau(s_F)}1 = 1+s_F +\tau-\tau\cdot  s_F.$$

This maximal is lower than the competitive ratio obtained from $f_a^\tau$ at $t = a$:

\begin{align*}
\eratioft{f_a^\tau(s_F)}a &= \frac{\tau}{a} \cdot (1+a) + \frac{a-\tau}{a} \cdot (1+s_F) \\
&= 1+\tau+s_F-\frac{\tau\cdot s_F}{a} \ge 1+2s_F - s_F^2.
\end{align*}

Therefore, $CR(f_1^\tau(s_F))\leq CR(f_a^\tau(s_F))$.

\item \textbf{Case 3:} $a = +\infty$.

Again, both functions lead to the same cost if $t < 1$.

For $t \ge 1$:
\begin{align*}
\eratioft{f_a^\tau(s_F)}{t} &= \frac{\tau}{t} \cdot t + \frac{t-\tau}{t} \cdot(1+s_F) 
= 1+\tau+s_F - \frac{\tau(1+s_F)}{t}.
\end{align*}

which converges to $1+\tau+s_F > \max_{t \ge 1} \eratioft{f_1^\tau(s_F)}t = 1+\tau+s_F - \tau\cdot s_F$.
\end{itemize}

Therefore, in all cases, we obtain $CR(f_1^\tau(s_F))\leq CR(f_a^\tau(s_F))$.

\medskip

\item We now optimize the values of $s_F$ and $\tau$. Recall that we must ensure $\tau\leq s_F$, as otherwise the response function would not be valid.

 We first express $CR(f_1^\tau(s_F))$ and show that we can assume further $s_F=\tau$.

Recall that 
\begin{equation}
\label{lem:2stepCR1}
\max_{t\geq 1}\eratioft{f_1^\tau(s_F)}{t} =\eratioft{f_1^\tau(s_F)}{1} = 1+ \tau+s_F - \tau\cdot s_F.
\end{equation}

Note that this expression is increasing in $s_F$, so is minimized for $s_F=\tau$.

For $t \in [s_F,1)$, we have:
\begin{align}
\eratioft{f_1^\tau(s_F)}{t} &= \left( \frac{\tau}{t} \cdot t + \frac{t-\tau}{t} \cdot (1+s_F) \right) \cdot \frac{1}{t}  
=  \frac{1+\tau+s_F}{t} - \frac{\tau(1+s_F)}{t^2}. \label{eq:baselinee}
\end{align}

This expression in function of $t$ is increasing then decreasing and maximized for $t^* = \frac{2\tau(1+s_F)}{1+\tau+s_F}$ where it equals:
$$\max_{t<1}\eratioft{f_1^\tau(s_F)}{t} =\eratioft{f_1^\tau(s_F)}{t^*} = \frac{(1+\tau+s_F)^2}{4\tau(1+s_F)}. \qquad \text{if $t^*\in[s_F,1]$}$$

 This expression is increasing in $s_F$ for $\tau \in(0,1)$. Further, we have $\eratioft{f_1^\tau(s_F)}{s_F}$ and $\eratioft{f_1^\tau(s_F)}{1}$ both increasing in $s_F$. We know that, for $s_F$ and $\tau$ fixed, $\eratioft{f_1^\tau(s_F)}{t}$ is maximized over $t\in[s_F,1]$  at $t$ equal to either $s_F$, $t^*$ or $1$, depending on whether $t^*<s_F$ and whether $t^*>1$.
So $\max_{t<1} \eratioft{f_1^\tau(s_F)}{t}$ is increasing in $s_F$ so is minimized when $s_F=\tau $.

Therefore, the competitive ratio of $f_1^\tau(s_F)$ is minimized for $s_F=\tau $. 

Then, $t^*=\frac{2s_F+2s_F^2}{1+2s_F}$. As $s_F\leq0.6\leq \frac1{\sqrt2}$, we have $t^*\leq 1$. We also have $t^*>s_F$, therefore,
$$CR(f_1^\tau(s_F))=\max(\eratioft{f_1^\tau(s_F)}{1}, \eratioft{f_1^\tau(s_F)}{t^*}).$$

 We finally optimize the value of $s_F$.

 By the opposite monotonicity of $\eratioft{f_1^{s_F}(s_F)}{1}$ and $\eratioft{f_1^{s_F}(s_F)}{t^*}$ for $s_F\in[0,0.6]$, this leads to the following equations:
\begin{align*}
\eratioft{f_1^{s_F}(s_F)}{1} &= \eratioft{f_1^{s_F}(s_F)}{t^*}\\
    1 + 2s_F - s_F^2 &= \frac{(1+2s_F)^2}{4s_F(1+s_F)}\\
    1 + 2s_F - s_F^2 &= 1 + \frac{1}{4s_F(1+s_F)}\\
8 s_F^2 + 4 s_F^3 - 4 s_F^4 &= 1.
\end{align*}
The unique solution in $[0,1]$ is $s_F \approx 0.3353$ which leads to a competitive ratio $ \approx 1.5582 $.
\end{itemize}

\end{proof}

We now determine the best strategy among the ones composed of three intervals on which $f$ is constant, starting by $+\infty$.

\begin{proof}[Proof of \Cref{lem:3step}]

	Consider a response function $f$ of the \USSR problem minimizing the competitive ratio among the set of functions composed of at most three steps including a renting step.
	Let $s_R\leq \tau\leq s_F$ and $b_1$ such that:

    \begin{itemize}
        \item $s \in [0,s_R)$ : $f(s)=\infty$
        \item $s \in [s_R,\tau)$ : $f(s)=b_1$
        \item $s \geq \tau$ : $f(s)=s_F$
    \end{itemize}

	From Lemma~\ref{lem:stepincr}, we obtain that $b_1\geq s_F$ and $f$ is non-increasing. Then, from Lemma~\ref{lem:buy01dstep}, we know that $b_1\leq 1$. We therefore have $ 0 \le s_R \le \tau \le s_F \le b_1 \le 1$.
	
By Lemma~\ref{lem:3stepb1eq1}, we now assume further that $b_1=1$.

Fix the values of $\tau $ and $s_F$, and we aim at optimizing $s_R$. For $t<1$, the value of $s_R$ has no impact on $\eratiof$, as buying at time $1$ or $\infty$ both lead to renting until $t$. For $t\geq1$, we have
    \begin{align*}
        \eratiof &= \frac{s_R}{t} \cdot t + \frac{\tau -s_R}{t} \cdot 2 + \frac{t-\tau }{t} \cdot (1+s_F)\\
        & = 1+s_R+s_F + \frac{\tau (1-s_F)-2s_R}{t}.
    \end{align*}

We consider two cases:
\begin{itemize}
\item $s_R\geq \frac{\tau (1-s_F)}{2}$: then $CR(f)\geq\eratioft{f}{+\infty}= 1+s_R+s_F\geq1+s_F+\frac{\tau (1-s_F)}{2}$.
\item $s_R<\frac{\tau (1-s_F)}{2}$: then $CR(f)\geq\eratioft{f}{1}= 1+s_F+{\tau (1-s_F)}-s_R \geq 1+s_F+\frac{\tau (1-s_F)}{2}$.
\end{itemize}

Therefore, the optimal choice of $s_R$ is $\frac{\tau (1-s_F)}{2}$ which leads to a constant expected ratio $\eratioft{f}{t}$ for $t \ge 1 $.

	We now fix $\tau $ and compute $\eratiof$ for $t\in[s_F,1]$, in order to determine the best choice for $s_F$ (the argument is similar to the one in the proof of Lemma~\ref{lem:2step}):
\begin{align*}
\eratioft{f}{t} &= \frac1t\left(\frac {\tau } t \cdot t + \frac{t-\tau }{t}(1+s_F)\right) 
         = \frac{\tau +1+s_F}{t} - \frac{\tau (1+s_F)}{t^2}.
\end{align*}

This expression is increasing then decreasing and maximized at $t^*=\frac{2\tau (1+s_F)}{1+\tau +s_F}$ where $\eratioft{f}{t^*}=\frac{(\tau +1+s_F)^2}{4\tau (1+s_F)}$. This new expression is then increasing in $s_F$ for $\tau \in(0,1)$. Further, we have $\eratioft{f}{s_F}$ and $\eratioft{f}{1}$ both increasing in $s_F$. We know that, for $s_F$ and $\tau$ fixed, $\eratioft{f}{t}$ is maximized over $t\in[s_F,1]$  at $t$ equal to either $s_F$, $t^*$ or $1$.
So  $\max_{t<1} \eratioft{f}{t}$ is increasing in $s_F$ so is minimized when $s_F=\tau $ as we must have $s_F\geq\tau$. This is also true for $t\geq 1$, as $s_R=\frac{\tau (1-s_F)}{2}$. Therefore, the competitive ratio of $f$ is minimized for $s_F=\tau $. 

Therefore, for $t\geq 1$, we have $CR(f)=1+\frac{s_F(3-s_F)}{2}$. And for $t\in [s_F,1)$, the maximum is reached at $s^* = \frac{2s_F(1+s_F)}{1+2s_F}$. We indeed have $s^*\in[s_F,1]$ for $s_F\in[0,0.6]$. We have $\eratioft{f}{s^*}=1 + \frac{1}{4s_F(1+s_F)}$.

    The two functions have opposing monotonicities in $s_F$, so the maximum of them is minimized at their intersection:
    \begin{align*}
        \frac{1}{4s_F(1+s_F)} &= \frac{s_F(3-s_F)}{2}.
    \end{align*} 
	The unique solution in $[0,1]$ is $s_F \approx 0.3723$, which leads to $s_R \approx 0.1168$ and $CR(f) \approx 1.489$.
\end{proof}

\section{Omitted proofs for Section~\ref{sec:conti}}

Consider a response function $f$ equal to $\infty$ on an initial interval
$[0,s_R)$, strictly decreasing and continuous on $[s_R,s_F)$, and constant and
equal to $s_F$ for $s \geq s_F$. Let
$$
s^*:=\lim_{s\rightarrow s_F^-}f(s),
$$
and assume that $f$ maps $[s_R,s_F]$ onto $[s^*,1]$.

Let $h:[s^*,1]\rightarrow[s_R,s_F]$ be the inverse of $f$ restricted
to $[s_R,s_F]$. We define
\begin{equation}
\label{eq:det-I}
I(t)=\int_{h(t)}^{s_F}f(u)du \qquad \text{ and } \qquad I=I(1)=\int_{s_R}^{s_F}f(u)du.
\end{equation}

The expected cost associated to $f$ can then be expressed as follows.

\begin{itemize}
    \item $t \in [0, s_F)$: for any signal, the function leads to a buying time larger than $t$:
    \begin{equation}
        \label{eq:det-cost-1}
        \ecostf = t
    \end{equation}
    
    \item $t \in [s_F, s^*)$: a signal $[s_R,s_F)$ leads to a buying time larger than $s^*$, and later signals lead to a buying time at $s_F$.:
    \begin{align}
        \ecostf &= \mathbb{P}(s < s_F) \cdot t + \mathbb P(s \ge s_F) \cdot (1+s_F) \nonumber \\
        &= \frac{s_F}{t} \cdot t + \frac{t-s_F}{t} \cdot (1+s_F)  = 1 + 2s_F - \frac{s_F(1+s_F)}{t}
        \label{eq:det-cost-2}
    \end{align}
    
    \item For $t \in [s^*,1)$: a signal $s < h(t)$ leads to a cost of $t$, a signal $s \in [h(t),s_F)$  leads to buying at $f(s)$,  and later signals lead to buying at $s_F$: 
    \begin{align}
        \ecostf  &= \mathbb{P}(s < h(t)) \cdot t + \int_{h(t)}^{s_F} \frac{f(s)+1}{t} ds + \mathbb{P}(s \ge s_F) \cdot (1+s_F) \nonumber \\
        &= \frac{h(t)}{t} \cdot (t) + \frac{I(t)+s_F-h(t)}{t} + \frac{t-s_F}{t} \cdot (1+s_F) \nonumber \\
        &=  1 + s_F + h(t) + \frac{I(t) - h(t) - s_F^2}{t} 
        \label{eq:det-cost-3}
    \end{align}
    
    \item For $t \in [1, \infty)$: apart from the initial renting interval, all other buying times are activated:
    \begin{align}
       \ecostf  &= \mathbb{P}(s < s_R) \cdot t + \int_{s_R}^{s_F} \frac{f(s)+1}{t}  ds + \mathbb{P}(s \ge s_F) \cdot (1+s_F) \nonumber \\
        &= \frac{s_R}{t} \cdot (t) + \frac{I+s_F-s_R}{t} + \frac{t-s_F}{t} \cdot (1+s_F) \nonumber \\
        &=  1 + s_R + s_F + \frac{I - s_R - s_F^2}{t}
        \label{eq:det-cost-4}
    \end{align}  

\end{itemize}

\begin{lemma}
\label{lem:UB}
There exists a response function for the \USSR problem leading to a
competitive ratio of $C^*$.
\end{lemma}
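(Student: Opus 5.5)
We take the response function $f$ of Section~\ref{sec:conti}: $f=\infty$ on $[0,s_R)$, $f=h^{-1}$ on $[s_R,s_F)$ with $h(t)=-3C^*e^{t-1}+2C^*t+2C^*-(1+s_F)$, and $f=s_F$ afterwards. We then check, using the four cost expressions \eqref{eq:det-cost-1}--\eqref{eq:det-cost-4}, that $\eratiof\leq C^*$ on each regime of $t$.

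\textbf{Structure of $f$.} We first verify that $f$ has the assumed structure, namely that $h$ is a continuous strictly decreasing bijection from $[s^*,1]$ onto $[s_R,s_F]$. Since $h'(t)=C^*(2-3e^{t-1})$ vanishes at $t=1+\ln(2/3)=s^*$, $h$ is decreasing on $[s^*,1]$. The endpoints follow from the definitions:
\begin{align*}
h(1)&=C^*-1-s_F=s_R,\\
h(s^*)&=2C^*s^*-(1+s_F)=s_F,
\end{align*}
where the first uses $3C^*e^{0}=3C^*$, and the second uses $3e^{s^*-1}=2$ together with the defining relation between $C^*$, $s^*$ and $s_F$ (a quadratic identity to check, since $2C^*s^*=s^*+\sqrt{1+{s^*}^2}=1+2s_F$). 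We also need $s^*\geq s_F$, so that $f$ is legal, and $s_R\geq 0$; both are numerical.

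\textbf{Regimes $t<s^*$.} For $t<s_F$ the ratio is $1$. For $t\in[s_F,s^*)$, by \eqref{eq:det-cost-2} the ratio is $\frac{1+2s_F}{t}-\frac{s_F(1+s_F)}{t^2}$. Its maximum over all $t$ is $1+\frac{1}{4s_F(1+s_F)}$, and this value equals $C^*$ by the choice of $s_F$, as in the proof of Theorem~\ref{th:det-lower}.

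\textbf{Regime $t\in[s^*,1)$.} Here the target is exact equality $\ecostf=C^*t$. By \eqref{eq:det-cost-3} this is equivalent to
\[
G(t):=t(1+s_F+h(t))+I(t)-h(t)-s_F^2=C^*t^2 .
\]
We prove it by differentiation. Using $I'(t)=-f(h(t))h'(t)=-t\,h'(t)$, we get $G'(t)=1+s_F+h(t)-h'(t)$. Substituting $h$ gives
\[
G'(t)=1+s_F-3C^*e^{t-1}+2C^*t+2C^*-(1+s_F)+3C^*e^{t-1}-2C^*=2C^*t,
\]
which is exactly the derivative of $C^*t^2$. It then remains to match the values at $t=s^*$. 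There $h=s_F$ and $I=0$, so $G(s^*)=s^*(1+2s_F)-s_F-s_F^2$. Requiring $G(s^*)={s^*}^2C^*$ is the quadratic relation defining $s_F$, which is where the formula for $s_F$ comes from; this identity is to be verified algebraically. In fact $h$ was constructed by solving the ODE $G'=2C^*t$, so this step is essentially confirmation.

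\textbf{Regime $t\geq1$.} By \eqref{eq:det-cost-4} it suffices to show $I=s_R+s_F^2$, since then the ratio is $1+s_R+s_F=C^*$. Continuity at $t=1$ gives this for free. Indeed, the case $t=1$ of the previous regime yields $1+s_F+s_R+I-s_R-s_F^2=C^*$, and since $C^*=1+s_R+s_F$ by the definition of $s_R$, we obtain $I=s_R+s_F^2$.

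\textbf{Main obstacle.} The only genuinely delicate step is the endpoint identity at $s^*$, checking that the constants $C^*$, $s_F$, $s^*$ are mutually consistent with $h(s^*)=s_F$ and $G(s^*)=C^*{s^*}^2$. Writing $q=\sqrt{1+{s^*}^2}$, both reduce to $s_F^2+(1-s^*)s_F-s^*/... $-type quadratics whose root is $s_F=\tfrac12(q-1+s^*)$. We would verify them by direct expansion, together with the numerical inequalities $0<s_R<s_F<s^*<1$.
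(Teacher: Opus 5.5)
Your proposal is correct and follows essentially the same route as the paper. The steps match: the same regime decomposition, the same derivative identity $G'(t)=2C^*t$ obtained from $I'(t)=-t\,h'(t)$, and the same continuity argument at $t=1$ giving $I=s_R+s_F^2$. The identity you defer at $t=s^*$ (which your last paragraph leaves garbled) is $s_F^2+(1-s^*)s_F-s^*/2=0$, i.e.\ $s^*=\frac{2s_F(1+s_F)}{1+2s_F}$; it follows by squaring $2s_F+1-s^*=\sqrt{1+{s^*}^2}$. The paper uses this identity to note that $s^*$ is exactly the maximizer of the ratio on $[s_F,s^*]$, so $C_f(s^*)=C^*{s^*}^2$ comes for free.
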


\begin{proof}

Set
\begin{align*}
s^* &= \ln \left(\frac{2e}{3} \right) \approx0.595
&s_F &= \frac{\sqrt{1+{s^*}^2}-(1-s^*)}{2} \approx 0.379 \\
C^* &= \frac 12 + \frac{\sqrt{1+{s^*}^2}}{2s^*} \approx 1.478
&s_R &= C^*-1-s_F \approx0.099.
\end{align*}
These definitions imply
$$
2C^*s^*=1+2s_F,
\qquad
s^*=\frac{2s_F(1+s_F)}{1+2s_F},
\qquad
C^*=1+\frac{1}{4s_F(1+s_F)}.
$$

Define
$$
h(t)=-3C^*e^{t-1}+2C^*t+2C^*-(1+s_F)
$$
for $t\in[s^*,1]$. Since $e^{s^*-1}=2/3$,
$$
h(s^*)=s_F, \qquad
h(1)=s_R, \quad \text{ and } \quad
h'(t)=C^*(2-3e^{t-1})<0 \text{ for }t\in(s^*,1].
$$
Hence $h$ is continuous and strictly decreasing from $s_F$ to $s_R$.

We define $f=h^{-1}$ on $[s_R,s_F)$, and extend it by
$$
f(s)=+\infty \text{ for }s<s_R,
\qquad
f(s)=s_F \text{ for }s \geq s_F.
$$
Since $f(s)\in[s^*,1]$ on $[s_R,s_F)$ and $s^*>s_F$, this response is
legal.

For completeness, its finite decreasing part can be written explicitly as
$$
f(s)
=
\frac{s+1+s_F}{2C^*}-1
-
W_{-1}\left(
-\frac32
\exp\left(\frac{s+1+s_F}{2C^*}-2\right)
\right).
$$

We now show that $CR(f)=C^*$.

\begin{itemize}
\item For $t<s_F$, by~\eqref{eq:det-cost-1}, we have $\eratiof=1$.

\item For $t\in[s_F,s^*]$, Equation~\eqref{eq:det-cost-2} gives
$$
\eratiof
=
\frac{1+2s_F}{t}
-
\frac{s_F(1+s_F)}{t^2}.
$$
This function is maximized at
$t^*=\frac{2s_F(1+s_F)}{1+2s_F}=s^*,$
where it equals
$1+\frac{1}{4s_F(1+s_F)}=C^*.$

\item For $t\in[s^*,1]$,  let
$$
C_f(t)=t\,\ecostf.
$$
Using~\eqref{eq:det-cost-3},
$$
C_f(t)
=
t(1+s_F+h(t))+I(t)-h(t)-s_F^2.
$$
Since
$I(t)=\int_{h(t)}^{s_F}f(u)du$
and $f(h(t))=t$, we have
$
I'(t)=-t h'(t).
$

Therefore,
\begin{align*}
C_f'(t)
&=
1+s_F+h(t)+t h'(t)+I'(t)-h'(t)\\
&=
1+s_F+h(t)-h'(t)=
2C^*t.
\end{align*}
The previous case gives
$
C_f(s^*)=C^*(s^*)^2,
$
so
$
C_f(t)=C^*t^2
$
for every $t\in[s^*,1]$. Hence $\eratiof=C^*$ on this whole interval.

\item For $t>1$, last case gives
$
C_f(1)=C^*.
$
On the other hand, Equation~\eqref{eq:det-cost-4} gives at $t=1$
$$
C_f(1)
=
1+s_R+s_F+I-s_R-s_F^2.
$$
Since $C^*=1+s_R+s_F$, we obtain
$$
I=s_R+s_F^2.
$$
Consequently, by~\eqref{eq:det-cost-4}, for every $t\geq1$,
$$
\ecostf
=
1+s_R+s_F
=
C^*.
$$

\end{itemize}

Therefore $\eratiof\leq C^*$ for every $t$, with equality for all
$t\geq s^*$, and
$$
CR(f)=C^*.
$$
\end{proof}

\section{Omitted proofs of Section~\ref{sec:rando}}

We consider the randomized response function defined in Section~\ref{sec:rando}, and we will prove that, for every horizon $t>0$,
$$
\mathbb E_{s_F,b_1}[\ecostf]
=
C_{\mathrm{rand}}\OPT(t).
$$

We first show that this algorithm is legal, and compute a few expressions that will be useful for showing the theorem.

\begin{lemma}
\label{lem:rand-upper-parameters}
The  distributions of $b_1$ and $s_F$ are well-defined, and the response function described
in~\eqref{eq:rand-upper-response} is legal. Moreover, we have
$
\mathbb E[s_F]
=
C_{\mathrm{rand}}s^*-\frac12$,
$
\mathbb E[b_1]
=
\frac{(s^*)^2}{2s^*-1}-1$, 
$
\mathbb E[s_F(1+s_F)]
=
C_{\mathrm{rand}}(s^*)^2$,
and $
\mathbb E[s_F-s_R]
=
C_{\mathrm{rand}}(2s^*-1).
$
\end{lemma}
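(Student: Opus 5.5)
The plan is to verify each claim by direct integration, using the identities $e^{1-s^*}=3/2$ and $e^{2s^*}=4e^2/9$ throughout. Both follow from $s^*=\ln(2e/3)$. From the definition $C_{\mathrm{rand}}=\frac{4e^2}{4e^2-9}$ we get $C_{\mathrm{rand}}-1=\frac{9}{4e^2-9}$, which gives the key relation
\[
(C_{\mathrm{rand}}-1)\left(e^{2s^*}-1\right)=1,
\qquad\text{equivalently}\qquad
(C_{\mathrm{rand}}-1)e^{2s^*}=C_{\mathrm{rand}}.
\]
This relation will collapse every boundary term below.

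\emph{Well-definedness.} The density $2(C_{\mathrm{rand}}-1)e^{2x}$ is positive, and its integral over $[0,s^*]$ is $(C_{\mathrm{rand}}-1)(e^{2s^*}-1)=1$. The density of $b_1$ is nonnegative on $[s^*,1]$ because $2s^*-1\approx0.19>0$ and $e^{x-s^*}\geq1$ there. Its integral is
\[
\frac{2\left(e^{1-s^*}-1-(1-s^*)\right)}{2s^*-1}=\frac{2(s^*-\tfrac12)}{2s^*-1}=1 .
\]

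\emph{Legality.} Let $k=\frac{C_{\mathrm{rand}}(1-s^*)-1/2}{C_{\mathrm{rand}}s^*-1/2}$, so that $s_R=ks_F$. I will check numerically that the numerator ($\approx0.082$) and the denominator ($\approx0.356$) are both positive, so that $k\in(0,1)$. Then every realization satisfies $s_R\leq s_F\leq s^*\leq b_1\leq 1$. Hence $f(s)=\infty\geq s$ on $[0,s_R)$ and $f(s)=b_1\geq s^*\geq s_F> s$ on $[s_R,s_F)$. Together with $f=s_F$ after $s_F$, this is exactly the definition of legal with fallback $s_F$.

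\emph{Moments of $s_F$.} I will use the antiderivatives $\int 2xe^{2x}\,dx=(x-\tfrac12)e^{2x}$ and $\int 2x^2e^{2x}\,dx=(x^2-x+\tfrac12)e^{2x}$. After substituting $(C_{\mathrm{rand}}-1)e^{2s^*}=C_{\mathrm{rand}}$, these give
\[
\mathbb E[s_F]=C_{\mathrm{rand}}(s^*-\tfrac12)+\tfrac{C_{\mathrm{rand}}-1}2=C_{\mathrm{rand}}s^*-\tfrac12,
\]
and
\[
\mathbb E[s_F^2]=C_{\mathrm{rand}}({s^*}^2-s^*+\tfrac12)-\tfrac{C_{\mathrm{rand}}-1}2 .
\]
Summing the two yields $\mathbb E[s_F(1+s_F)]=C_{\mathrm{rand}}{s^*}^2$.

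\emph{Moments of $b_1$ and of $s_F-s_R$.} For $b_1$, the antiderivative $\int x e^{x-s^*}\,dx=(x-1)e^{x-s^*}$ gives
\[
\int_{s^*}^1 2x\left(e^{x-s^*}-1\right)dx=2(1-s^*)-(1-{s^*}^2)=(1-s^*)^2 .
\]
Dividing by $2s^*-1$ and rewriting $(1-s^*)^2={s^*}^2-(2s^*-1)$ gives $\mathbb E[b_1]=\frac{{s^*}^2}{2s^*-1}-1$. Finally, $\mathbb E[s_F-s_R]=(1-k)\,\mathbb E[s_F]$. Here $1-k=\frac{C_{\mathrm{rand}}(2s^*-1)}{C_{\mathrm{rand}}s^*-1/2}$, and multiplying by $\mathbb E[s_F]=C_{\mathrm{rand}}s^*-\tfrac12$ cancels the denominator, giving $C_{\mathrm{rand}}(2s^*-1)$.

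None of these steps is a real obstacle. The only points needing care are the sign checks ($2s^*-1>0$ and $0<k<1$), which rest on numerical values, and consistently applying $(C_{\mathrm{rand}}-1)e^{2s^*}=C_{\mathrm{rand}}$ so that the boundary terms simplify.
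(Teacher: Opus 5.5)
Your proposal is correct and follows the paper's proof essentially step by step. You use the same identity $(C_{\mathrm{rand}}-1)e^{2s^*}=C_{\mathrm{rand}}$ to collapse boundary terms, the same numerical check that the coefficient $k\approx0.234$ of $s_R$ lies in $(0,1)$ for legality, the same antiderivative $(x-1)e^{x-s^*}$ for $\mathbb E[b_1]$, and the same cancellation $\mathbb E[s_F-s_R]=(1-k)\,\mathbb E[s_F]$. The differences are cosmetic. The paper gets $\mathbb E[s_F(1+s_F)]$ in one stroke from $\frac{d}{dx}\bigl(x^2e^{2x}\bigr)=2x(1+x)e^{2x}$, whereas you compute $\mathbb E[s_F^2]$ separately and add. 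You also state explicitly that the density of $b_1$ is nonnegative, which the paper leaves implicit.
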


\begin{proof}
The fallback density integrates to one since
$$
\int_0^{s^*}2(C_{\mathrm{rand}}-1)e^{2x}dx
=
(C_{\mathrm{rand}}-1)(e^{2s^*}-1)
=
1.
$$

The density of $b_1$ also integrates to one. Using
$e^{1-s^*}=3/2$,
$$
\int_{s^*}^1
\frac{2(e^{x-s^*}-1)}{2s^*-1}dx
=
\frac{2(e^{1-s^*}-2+s^*)}{2s^*-1}
=
\frac{2(s^*-\frac12)}{2s^*-1}
=
1.
$$

The coefficient defining $s_R$ is approximately $0.234$, hence
$$
0\leq s_R\leq s_F\leq s^*\leq b_1.
$$
Therefore the response is well defined and every post-signal purchase
occurs after the signal.

For the first moment of the fallback,
$$
\mathbb E[s_F]
=
\int_0^{s^*}2(C_{\mathrm{rand}}-1)xe^{2x}dx
=
(C_{\mathrm{rand}}-1)
\left[
e^{2x}\left(x-\frac12\right)
\right]_0^{s^*}
=
C_{\mathrm{rand}}s^*-\frac12,
$$
where we used
$$
(C_{\mathrm{rand}}-1)e^{2s^*}=C_{\mathrm{rand}}.
$$

Similarly,
$$
\mathbb E[s_F(1+s_F)]
=
\int_0^{s^*}2(C_{\mathrm{rand}}-1)x(1+x)e^{2x}dx
=
(C_{\mathrm{rand}}-1)
\left[x^2e^{2x}\right]_0^{s^*}
=
C_{\mathrm{rand}}(s^*)^2.
$$

We also have,
$$
s_F-s_R
=
\frac{C_{\mathrm{rand}}(2s^*-1)}
     {C_{\mathrm{rand}}s^*-1/2}s_F.
$$
Therefore,
$$
\mathbb E[s_F-s_R]
=
\frac{C_{\mathrm{rand}}(2s^*-1)}
     {C_{\mathrm{rand}}s^*-1/2}
\mathbb E[s_F]
=
C_{\mathrm{rand}}(2s^*-1).
$$

Finally,
\begin{align*}
E[b_1] &= \int_{s^*}^1
\frac{2x(e^{x-s^*}-1)}{2s^*-1}dx
=
\frac 2{2s^*-1} \left[(x-1)e^{x-s^*}-\frac12x^2\right]_{s^*}^1\\
&=
\frac 2{2s^*-1} \left[\frac12-s^*+\frac12(s^*)^2\right]
= \frac{(s^*)^2}{2s^*-1}-1.
\end{align*}

\end{proof}

\begin{proof}[Proof of Theorem~\ref{th:rand-upper}]
Fix a horizon $t>0$. As in the lower-bound proof, it is convenient to
multiply the expected cost by $t$, which removes the density $1/t$ of
the uniform signal. We will prove that for all $t$,
$$
t\mathbb E_{s_F,b_1}[\ecostf]
=
C_{\mathrm{rand}}t\OPT(t).
$$

The main idea is to start from the response that rents for the entire
horizon, whose rescaled cost is $t^2$, and then add the two possible
changes introduced by our response: buying at the fallback $s_F$, and
buying at the middle buying time $b_1$.

Fix first $s_F$ and $b_1$. If $s_F\leq t$, then for signals arriving
after $s_F$, the algorithm buys at $s_F$ instead of renting until $t$.
The corresponding signal interval has length $t-s_F$, and on this
interval the cost changes from $t$ to $1+s_F$. Hence the fallback adds
$$
(t-s_F)(1+s_F-t)\mathds 1_{\{s_F\leq t\}}
$$
to the rescaled cost.

Similarly, if $b_1\leq t$, then signals in $[s_R,s_F)$ cause a purchase
at $b_1$ instead of renting until $t$. This signal interval has length
$s_F-s_R$, and the cost changes from $t$ to $1+b_1$. Hence the middle
purchase adds
$$
(s_F-s_R)(1+b_1-t)\mathds 1_{\{b_1\leq t\}}.
$$

The two signal intervals are disjoint. Therefore, for fixed $s_F$ and
$b_1$,
$$
t\ecostf
=
\underbrace{t^2}_{\text{renting  baseline}}
+
\underbrace{(t-s_F)(1+s_F-t)\mathds 1_{\{s_F\leq t\}}}_{\text{fallback correction}}
+
\underbrace{(s_F-s_R)(1+b_1-t)\mathds 1_{\{b_1\leq t\}}}_{\text{$b_1$-purchase correction}}.
$$

We now average these three terms over $s_F$ and $b_1$. The fallback term
is obtained by integrating over the density of $s_F$. For the
$b_1$-purchase term, independence of $s_F$ and $b_1$, together with
Lemma~\ref{lem:rand-upper-parameters}, gives
\begin{align*}
\mathbb E_{s_F,b_1}
\left[
(s_F-s_R)(1+b_1-t)\mathds 1_{\{b_1\leq t\}}
\right]
&=
\mathbb E[s_F-s_R]
\int_{s^*}^{\min(t,1)}
\frac{2(e^{x-s^*}-1)}{2s^*-1}
(1+x-t)dx\\
&=
2C_{\mathrm{rand}}
\int_{s^*}^{\min(t,1)}
(e^{x-s^*}-1)(1+x-t)dx.
\end{align*}

Consequently,
\begin{align}
\label{eq:rand-upper-cost}
t\mathbb E_{s_F,b_1}[\ecostf]
&=
t^2
+
\int_0^{\min(t,s^*)}
2(C_{\mathrm{rand}}-1)e^{2x}
(t-x)(1+x-t)dx
\nonumber\\
&+
2C_{\mathrm{rand}}
\int_{s^*}^{\min(t,1)}
(e^{x-s^*}-1)(1+x-t)dx.
\end{align}

The rest of the proof consists in evaluating this expression. There are
three regimes. Before $s^*$, only the fallback may be reached. Between
$s^*$ and $1$, all fallbacks can be reached and only part of the
support of $b_1$ can be reached. After $1$, both random purchase times
have their entire support before the end of the horizon.

\begin{itemize}
\item 
\emph{Case 1: $0<t\leq s^*$.}

The $b_1$-purchase term (i.e. the last term in~\eqref{eq:rand-upper-cost}) is zero. The
fallback correction is
$$
\int_0^t
2(C_{\mathrm{rand}}-1)e^{2x}
(t-x)(1+x-t)dx
=
(C_{\mathrm{rand}}-1)
\left[-(x-t)^2e^{2x}\right]_0^t
=
(C_{\mathrm{rand}}-1)t^2.
$$
Hence
$$
t\mathbb E_{s_F,b_1}[\ecostf]
=
t^2+(C_{\mathrm{rand}}-1)t^2
=
C_{\mathrm{rand}}t^2.
$$

\item \emph{Case 2: $s^*\leq t\leq1$.}
All possible fallback times satisfy $s_F\leq t$. Hence the rent term and
the fallback correction together give
$$
t^2+
\mathbb E[(t-s_F)(1+s_F-t)]
=
(1+2\mathbb E[s_F])t
-\mathbb E[s_F(1+s_F)].
$$
Using Lemma~\ref{lem:rand-upper-parameters},
$$
t^2+
\mathbb E[(t-s_F)(1+s_F-t)]
=
2C_{\mathrm{rand}}s^*t
-C_{\mathrm{rand}}(s^*)^2.
$$

The $b_1$-purchase correction is
\begin{align*}
2C_{\mathrm{rand}}
\int_{s^*}^t
(e^{x-s^*}-1)(1+x-t)dx
&=
2C_{\mathrm{rand}}
\left[
(x-t)e^{x-s^*}
-x-\frac{x^2}{2}+tx
\right]_{s^*}^t
=
C_{\mathrm{rand}}(t-s^*)^2.
\end{align*}

Therefore,
$$
t\mathbb E_{s_F,b_1}[\ecostf]
=
2C_{\mathrm{rand}}s^*t
-C_{\mathrm{rand}}(s^*)^2
+C_{\mathrm{rand}}(t-s^*)^2
=
C_{\mathrm{rand}}t^2.
$$

\item \emph{Case 3: $t\geq1$.}
As in the previous case, the rent term and the fallback correction give
$$
2C_{\mathrm{rand}}s^*t
-C_{\mathrm{rand}}(s^*)^2.
$$

Now, every possible middle buying time satisfies $b_1\leq t$, so, using Lemma~\ref{lem:rand-upper-parameters}, the
$b_1$-purchase correction becomes
\begin{align*}
\mathbb E_{s_F}[s_F-s_R](1+\mathbb E_{b_1}[b_1]-t) &= C_{\mathrm{rand}}(2s^*-1)\left(\frac{(s^*)^2}{2s^*-1}-t\right) \\
&= C_{\mathrm{rand}} (s^*)^2 - t\cdot  C_{\mathrm{rand}}(2s^*-1).
\end{align*}

Hence
$$
t\mathbb E_{s_F,b_1}[\ecostf]
=
2C_{\mathrm{rand}}s^*t
-C_{\mathrm{rand}}(s^*)^2
+C_{\mathrm{rand}}(s^*)^2
-C_{\mathrm{rand}}(2s^*-1)t
=
C_{\mathrm{rand}}t.
$$

\end{itemize}

We have thus shown that, for every $t>0$,
$$
t\mathbb E_{s_F,b_1}[\ecostf]
=C_{\mathrm{rand}}\cdot t \cdot \min(1,t) = C_{\mathrm{rand}}\cdot t \cdot \OPT(t),$$ 
which proves the theorem.
\end{proof}

\end{document}